\documentclass[letterpaper, 10 pt, journal, twoside]{IEEEtran}

\IEEEoverridecommandlockouts                              %

\usepackage{cite}
\usepackage{amsmath,amssymb,amsfonts,amsthm}
\usepackage{bm, bbm, hyperref}
\usepackage{nicefrac}
\usepackage{xcolor}
\usepackage[ruled,vlined]{algorithm2e}
\usepackage{graphicx}

\newcommand{\N}[0]{\ensuremath{\mathbb{N}}}
\newcommand{\R}[0]{\ensuremath{\mathbb{R}}}
\newcommand{\Id}[0]{\ensuremath{\operatorname{Id}}}
\newcommand{\norm}[1]{\left\lVert#1\right\rVert}

\DeclareMathOperator*{\st}{subject~to}
\DeclareMathOperator*{\argmin}{arg\,min}

\newtheorem{assumption}{Assumption}

\newtheorem{lemma}{Lemma}
\newtheorem{proposition}{Proposition}
\newtheorem{remark}{Remark}
\newtheorem{theorem}{Theorem}
\newtheorem{example}{Example}

\usepackage[framemethod=tikz]{mdframed}

\usepackage{enumitem}

\usepackage[acronym]{glossaries}
\usepackage[nohyperlinks]{acronym}

\newacronym{iid}{i.i.d.}{independent and identically distributed}
\newacronym{wrt}{w.r.t.}{with respect to}
\newacronym{wlog}{w.l.o.g.}{without loss of generality}
\newacronym{PAC}{PAC}{probably approximately correct}
\newacronym{SNEP}{SNEP}{stochastic Nash equilibrium problem}
\newacronym{SNE}{SNE}{stochastic Nash equilibrium}
\newacronym{SGD}{SGD}{stochastic gradient descent}
\newacronym{NE}{NE}{Nash equilibrium}
\newacronym{GNE}{GNE}{generalized Nash equilibrium}
\newacronym{SGNEP}{SGNEP}{stochastic generalized Nash equilibrium problem}
\newacronym{KKT}{KKT}{Karush–Kuhn–Tucker}
\newacronym{FB}{FB}{forward-backward}
\newacronym{FBF}{FBF}{forward-backward-forward}
\newacronym{FPI}{FPI}{fixed-point iteration}
\newacronym{EG}{EG}{extragradient}
\newglossaryentry{VI}
{
	name={VI},
	description={variational inequality},
	first={\glsentrydesc{VI} (\glsentrytext{VI})},
	plural={VIs},
	descriptionplural={variational inequalities},
	firstplural={\glsentrydescplural{VI} (VIs)}
}
\newacronym{DR}{DR}{Douglas-Rachford}
\newacronym{DY}{DY}{Davis-Yin}
\newacronym{SAA}{SAA}{sample average approximation}
\newacronym{SA}{SA}{stochastic approximation}
\newacronym{PEV}{PEV}{plug-in electric vehicle}
\newacronym{DSO}{DSO}{distribution system operator}

\title{\LARGE \bf
Finite-sample guarantees for data-driven operator splitting methods via martingale inequalities
}

\author{Andrea Martin, Filippo Fabiani, and Giuseppe Belgioioso
\thanks{A. Martin and G. Belgioioso are with the School of Electrical Engineering and Computer Science, and Digital Futures, KTH Royal Institute of Technology, Sweden. E-mail addresses: \{andrmar, giubel\}@kth.se.}
\thanks{Filippo Fabiani is with the IMT School for Advanced Studies Lucca,
Italy. E-mail address: filippo.fabiani@imtlucca.it.}
\thanks{This work was supported by Digital Futures and the Wallenberg AI, Autonomous Systems and Software Program (WASP) funded by the Knut and Alice Wallenberg Foundation.}
}

\begin{document}

\maketitle

\thispagestyle{empty}
\pagestyle{empty}

\begin{abstract}
  Operator splitting methods are a fundamental class of algorithms for solving structured monotone inclusion problems arising in optimization, control, and game theory. We consider the case, common in stochastic regimes, where the forward evaluation of one of the constituent operators is either unavailable in closed form or computationally expensive to evaluate, and is therefore approximated using a finite number of noisy oracle samples. We establish distribution-free, finite-sample certificates for the quality of the output produced by data-driven \gls*{DY} splitting algorithms. Unlike previous works, our analysis directly controls the residual error via martingale inequalities instead of relying on algorithmic stability arguments for a tailored surrogate loss, yielding the first a priori certificates whose statistical excess provably vanishes with the sample size. We further show that, under linear convergence of the \gls*{DY} splitting algorithm, the dependence of our bounds on the iteration count improves from linear growth to exponential decay. We validate our theoretical results on a stochastic portfolio optimization problem with uncertain asset returns.
\end{abstract}

\glsresetall

\section{Introduction}
Monotone inclusions provide a unifying mathematical modeling framework for a broad class of decision-making problems arising in optimization, control, and game theory. In fact, the solution sets of many convex optimization problems, variational inequalities, and saddle-point or equilibrium problems admit equivalent characterizations as zeros of the sum of two or more monotone operators~\cite{bauschke2017convex, ryu2022large}. Operator splitting schemes are a fundamental class of algorithms for computing such zeros. These methods exploit the structure of the individual operators to replace the direct solution of the original inclusion with a sequence of simpler subproblems, often resulting in low-complexity iterations that are well-suited to embedded control~\cite{stathopoulos2016operator} and signal processing~\cite{combettes2011proximal} applications.

In stochastic settings, the evaluations of the individual operators frequently involve expectations \gls*{wrt} an underlying probability distribution. In many applications, such as resource allocation problems with uncertain demand, portfolio optimization with uncertain asset returns~\cite{markowitz1955portfolio, boyd2017multi}, or regression problems with random input-output data~\cite{bousquet2002stability}, this distribution is unknown and accessible only through samples, preventing the required operator evaluations from being computed exactly.

To address this challenge, two main data-driven schemes for approximating the required operator evaluations have been considered in the literature: \gls*{SAA}~\cite{shapiro2003monte, shapiro2021lectures} and \gls*{SA}~\cite{robbins1951stochastic}. In \gls*{SAA}, the unavailable expectation is replaced by an empirical average over a finite set of \gls*{iid} samples, resulting in a deterministic approximation of the original problem~\cite{franci2020distributed}. In contrast, \gls*{SA} methods rely on a series of stochastic estimates constructed by drawing new samples at each iteration of the algorithm. This substantially reduces the computational cost of each iteration, at the price of introducing sampling noise and typically requiring suitable step-size selection rules and additional assumptions to guarantee convergence~\cite{jiang2008stochastic, koshal2012regularized, yousefian2017smoothing}. To mitigate this tradeoff, several variance-reduced \gls*{SA} strategies have also been proposed. These include methods that progressively increase the number of samples used at each iteration~\cite{iusem2017extragradient, franci2021stochastic}, as well as schemes that combine inexpensive stochastic updates with occasional computations over larger batches~\cite{gower2020variance, alacaoglu2022stochastic}. Besides their different constructions, this body of literature \cite{franci2020distributed, jiang2008stochastic, koshal2012regularized, yousefian2017smoothing, iusem2017extragradient, franci2021stochastic, gower2020variance, alacaoglu2022stochastic} mainly focuses on asymptotic convergence to a solution of the original inclusion as the total number of samples used by an iterative algorithm grows unbounded, and does not characterize the accuracy that can be reached when only a limited number of data points is available and generating new ones is costly.

Motivated by this limitation, several works have recently investigated the finite-sample behavior of data-driven iterative algorithms based on sample average or stochastic approximations. To address this question, a prominent line of research, pioneered for stochastic optimization by~\cite{hardt2016train}, builds upon the algorithmic stability framework of~\cite{bousquet2002stability}, which relates the sensitivity of a learning algorithm to perturbations in its input dataset to its generalization performance. In particular, stability properties of stochastic gradient methods have received considerable attention, with early contributions establishing stability bounds in both convex and nonconvex settings~\cite{hardt2016train,bassily2020stability}. This line of analysis was subsequently extended to minimax problems~\cite{lei2021stability}, adversarial learning~\cite{xing2021algorithmic}, and variational inequalities~\cite{zhao2024learning}. More recently, this perspective has also been brought to operator splitting methods, with~\cite{fabiani2026finite} deriving finite-sample guarantees for data-driven \gls*{FB} schemes via algorithmic stability of a scalar surrogate loss for the inclusion residual. 

In this paper, we argue that, while algorithmic stability provides an elegant framework for studying generalization in expected-risk problems~\cite{hardt2016train, bassily2020stability, lei2021stability, xing2021algorithmic}, its application to residual certification in monotone inclusion problems through surrogate losses~\cite{fabiani2026finite, fabiani2026certifying} may instead lead to bounds whose statistical excess need not vanish with the sample size. This observation raises the question of whether one can obtain asymptotically consistent finite-sample residual certificates. Focusing on the \gls*{DY} three-operator splitting method~\cite{davis2017three}, a well-known scheme that encompasses the \gls*{FB} splitting as a special case, we answer in the affirmative. Specifically, we first revisit the algorithmic stability argument of~\cite{fabiani2026finite} in the context of \gls*{DY} splitting, obtaining an a posteriori, finite-sample certificate for the true fixed-point residual and, crucially, identifying the use of surrogate losses as a key source of potential conservatism in the analysis. %
To address this limitation, we then observe that the conditional expectations of the statistical error term arising from the \gls*{SAA}, as the samples are revealed one at a time, form a Doob martingale~\cite{doob1953stochastic}. Based on this insight, we establish asymptotically consistent a priori, finite-sample residual certificates by combining a novel replace-one sensitivity bound for the output of the data-driven \gls*{DY} algorithm with concentration inequalities for martingales~\cite{boucheron2013concentration, pinelis1994optimum}. We further show that, under standard assumptions guaranteeing linear convergence of the \gls*{DY} method~\cite{davis2017three}, the dependence of our replace-one sensitivity bound on the number of algorithmic iterations improves from linear growth to exponential decay, leading to sharper finite-sample residual certificates. Last, we present numerical experiments on a portfolio optimization problem with uncertain asset returns to validate our theoretical results.\looseness-1

\smallskip 

\emph{Notation:} We mostly adopt standard operator theoretic notation and definitions from~\cite{bauschke2017convex}. For a closed convex set $\mathcal{Y} \subseteq \R^n$, we denote the projection of $x \in \R^n$ onto $\mathcal{Y}$ by $\operatorname{proj}_{\mathcal{Y}}(x) = \argmin_{y \in \mathcal{Y}}~\|y-x\|$ and the distance of $x$ from $\mathcal{Y}$ by $\operatorname{dist}(x, \mathcal{Y}) = \min_{y \in \mathcal{Y}}~\|y-x\|$. The normal cone of $\mathcal{Y}$ is the operator $\operatorname{N}_{\mathcal{Y}} : \R^n \rightrightarrows \R^n$ defined by $\operatorname{N}_{\mathcal{Y}}(x) = \{v \in \R^n : \sup_{y \in \mathcal{Y}} ~ \langle v, y-x \rangle \leq 0\}$ if $x \in \mathcal{Y}$, and $\operatorname{N}_{\mathcal{Y}}(x) = \emptyset$ otherwise. We denote the graph of an operator $A: \R^n \rightrightarrows \R^n$ by $\operatorname{gra}(A) = \{(x, y) \in \R^n \times \R^n : y \in A(x)\}$. We denote the set of zeros of $A$ by $\operatorname{zer}(A) = \{x \in \R^n : 0 \in A(x)\}$. We say that $A$ is monotone if $\langle x-y, u-v\rangle \geq 0$ for any $(x, u) \in \operatorname{gra}(A)$ and $(y, v) \in \operatorname{gra}(A)$; additionally, $A$ is maximal monotone if there is no other monotone operator $B: \R^n \rightrightarrows \R^n$ such that $\operatorname{gra}(A) \subset \operatorname{gra}(B)$. For any $\theta > 0$, a single-valued operator $A : \R^n \to \R^n$ is $\theta$-cocoercive if $\langle x-y, T(x)-T(y)\rangle \geq \theta \norm{T(x) - T(y)}^2$ for any $x, y \in \R^n$. For any $\gamma > 0$, the resolvent of $A$ is $J_{\gamma A} = (\Id + \gamma A)^{-1}$ and the reflected resolvent of $A$ is $R_{\gamma A} = 2 J_{\gamma A} - \Id$.

\section{Problem formulation}
\label{sec:problem_formulation}

We consider monotone inclusion problems in the form:
\begin{equation}
\label{eq:monotone_inclusion}
    \operatorname{find} ~ x \in \R^n ~ \operatorname{such~that} ~ 0 \in A(x) + B(x) + C(x)\,,
\end{equation}
where $A: \R^n \rightrightarrows \R^n$ and $B: \R^n \rightrightarrows \R^n$ are maximally monotone operators, and $C: \R^n \to \R^n$ is a single-valued $\theta$-cocoercive map. Throughout the paper, we assume that $\operatorname{zer}(A+B+C) \neq \emptyset$. When all three operators are available and the corresponding resolvent or forward computations are reasonably inexpensive, a standard approach to solving~\eqref{eq:monotone_inclusion} relies on the \gls*{DY} splitting~\cite{davis2017three}. This method treats $A$ and $B$ implicitly through separate evaluations of their resolvents, while evaluating $C$ explicitly in a forward step. The \gls*{FPI} associated with the \gls*{DY} splitting is
\begin{subequations}
    \label{eq:davis_yin_splitting}
    \begin{align}
        \label{eq:davis_yin_splitting_x}
        x_t &= J_{\gamma B}(z_t)\\
        \label{eq:davis_yin_splitting_y}
        y_t &= J_{\gamma A}(2x_t - z_t - \gamma C(x_t))\\
        \label{eq:davis_yin_splitting_z}
        z_{t+1} &= z_t + y_t - x_t\,,
    \end{align}
\end{subequations}
where $\gamma > 0$ denotes the step-size, or, equivalently, $z_{t+1} = T_{\mathrm{DY}}(z_t)$, where $T_{\mathrm{DY}} = I - J_{\gamma B} + J_{\gamma A} \left(2 J_{\gamma B} - I - \gamma C J_{\gamma B}\right)$. In particular, for any $\gamma \in (0, 2 \theta)$, the operator $T_{\mathrm{DY}}$ is averaged, the sequence $\{z_t\}_{t \in \N}$ defined by~\eqref{eq:davis_yin_splitting_z} converges to a fixed point of
$T_{\mathrm{DY}}$, while sequences $\{x_t\}_{t \in \N}$ and $\{y_t\}_{t \in \N}$ defined by~\eqref{eq:davis_yin_splitting_x} and~\eqref{eq:davis_yin_splitting_y}, respectively, converge to a solution of~\eqref{eq:monotone_inclusion} for every $z_0 \in \R^n$, see~\cite[Proposition~3.1]{davis2017three} and~\cite[Theorem~1.1]{davis2017three}.

In this paper, we consider the case in which the forward evaluation of the operator $C$ in~\eqref{eq:davis_yin_splitting_y} must be approximated from data. To this end, we assume access to a stochastic oracle $O: \R^n \times \Xi \to \R^n$ and a dataset $\mathcal{D} = \{\xi^1, \dots, \xi^N\}$ of $N \in \N$ \gls*{iid} samples from an \textit{unknown} probability distribution $\mathbb{P}$ over the sample space $\Xi \subseteq \R^d$. Then, we consider the \gls*{SAA} of $C$ in~\eqref{eq:davis_yin_splitting_y} given by
\begin{equation}
\label{eq:data_driven_approximation_C_saa}
	\widehat{C}_{\mathcal{D}}(x) = \frac{1}{N} \sum_{i = 1}^{N} O(x, \xi^{i})\,,
\end{equation}
where $O(x, \xi^i)$ denotes the output obtained by querying the oracle at a point $x \in \R^n$ with the training sample $\xi^i \in \mathcal{D}$.

Next, we postulate few assumptions on the noisy oracle $O$:
\begin{assumption}
    \label{ass:oracle_unbiased}
	For all $x \in \R^n$, $\mathbb{E}_{\xi \sim \mathbb{P}}[{O(x, \xi)}] = C(x)$.
\end{assumption}
\begin{assumption}
\label{ass:oracle_replacement-sensitivity}
    There exists $\varsigma > 0$ such that
    \begin{equation}
    \label{eq:replacement-sensitivity}
        \norm{O(x, \xi) - O(x, \xi')} \leq \varsigma\,,
    \end{equation}
    for every $x \in \R^n$ and every
    $\xi, \xi' \in \Xi$.
\end{assumption}
\begin{assumption}
\label{ass:oracle_theta_cocoercive}
    The map $O(\cdot, \xi)$ is $\theta$-cocoercive for all $\xi \in \Xi$.
\end{assumption}
Assumption~\ref{ass:oracle_unbiased} is common in the statistical learning literature~\cite{tsiamis2023statistical}, ensuring that the oracle $O$ is unbiased. Assumption~\ref{ass:oracle_replacement-sensitivity} instead imposes a uniform upper bound on the sensitivity of the oracle to changes in the stochastic input $\xi$, generalizing~\cite[Assumption~2.2]{fabiani2026finite}. In fact, if $\norm{O(x, \xi)} \leq M$ for every $x \in \R^n$ and $\xi \in \Xi$ as per~\cite[Assumption~2.2]{fabiani2026finite}, then~\eqref{eq:replacement-sensitivity} holds with $\varsigma = 2M$. Last, Assumption~\ref{ass:oracle_theta_cocoercive} formalizes the requirement that the oracle $O$ inherits any structural properties satisfied by the true operator $C$ akin to similar works, e.g.,~\cite{hardt2016train, farnia2021train, fabiani2026finite}.

Leveraging the \gls*{SAA} in \eqref{eq:data_driven_approximation_C_saa}, we define the \gls*{FPI} associated with the data-driven \gls*{DY} splitting as
\begin{subequations}
\label{eq:data_driven_davis_yin_splitting}
\begin{align}
    \label{eq:data_driven_davis_yin_splitting_x}
    \hat{x}_t &= J_{\gamma B}(\hat{z}_t)\\
    \hat{y}_t &= J_{\gamma A}(2\hat{x}_t - \hat{z}_t - \gamma \widehat{C}_{\mathcal{D}}(\hat{x}_t))\\
    \hat{z}_{t+1} &= \hat{z}_t + \hat{y}_t - \hat{x}_t\,,
\end{align}
\end{subequations}
or, equivalently, as $z_{t+1} = \widehat{T}_{\mathcal{D}}(z_t)$, where $\widehat{T}_{\mathcal{D}} = I - J_{\gamma B} + J_{\gamma A} (2 J_{\gamma B} - I - \gamma \widehat{C}_{\mathcal{D}} J_{\gamma B})$ denotes the empirical counterpart of the \gls*{DY} fixed-point operator $T_{\mathrm{DY}}$. In particular, we observe that, unlike~\eqref{eq:davis_yin_splitting}, the \gls*{FPI} in~\eqref{eq:data_driven_davis_yin_splitting} only involves quantities that are either directly available or computable from the samples $\xi^i$, with $i \in \{1, \dots, N\}$. Moreover, conditional on a dataset realization $\mathcal{D} \sim \mathbb{P}^N$, the \gls*{FPI}~\eqref{eq:data_driven_davis_yin_splitting} is deterministic, as no additional randomness is introduced during its execution, and symmetric \gls*{wrt} $\mathcal{D}$, since $\widehat{C}_{\mathcal{D}}$ depends on the samples only through their average and the iterate $\hat{z}_K$ obtained after $K \in \N$ steps of~\eqref{eq:data_driven_davis_yin_splitting} is thus invariant under permutations of $\{\xi^1,\ldots,\xi^N\}$.

Our main research question is then to derive non-asymptotic bounds characterizing, for a given sample size $N$ and iteration budget $K$, how close one can get to a solution of~\eqref{eq:monotone_inclusion} by using the data-driven \gls*{FPI} in~\eqref{eq:data_driven_davis_yin_splitting} instead of the exact \gls*{FPI} in~\eqref{eq:davis_yin_splitting}. More formally, given a confidence parameter $\delta \in (0,1)$, our goal is to determine a radius $\epsilon \geq 0$ ensuring that the data-driven output $\hat{z}_K$ satisfies the \emph{true} \gls*{DY} fixed-point residual bound
\begin{equation}
\label{eq:davis_yin_fixed_point_residual_bound}
    \rho_{\mathrm{DY}}(\hat{z}_K) = \|\hat{z}_{K} - T_{\mathrm{DY}}(\hat{z}_{K})\| \leq \gamma \epsilon\,,
\end{equation}
with probability at least $1 - \delta$ over the random draw $\mathcal{D} \sim \mathbb{P}^N$.

\begin{remark}
    The bound~\eqref{eq:davis_yin_fixed_point_residual_bound} can also be interpreted as a finite-sample certificate for the original inclusion problem~\eqref{eq:monotone_inclusion}. To see this, let $x$ and $y$ denote the points generated by the data-driven output $\hat{z}_K$ through~\eqref{eq:davis_yin_splitting_x} and~\eqref{eq:davis_yin_splitting_y}, respectively, that is,
    \begin{equation*}
        x = J_{\gamma B}(\hat{z}_K)\,, \quad y = J_{\gamma A}(2x - \hat{z}_K - \gamma C(x))\,.
    \end{equation*}
    Then, by~\eqref{eq:davis_yin_splitting_z}, $\norm{x-y} = \norm{\hat{z}_{K} - T_{\mathrm{DY}}(\hat{z}_{K})} \leq \gamma \epsilon$ and by the definition of the resolvents, there exist $a \in A(y)$ and $b \in B(x)$ such that $a + b + C(x) = \frac{x - y}{\gamma}$. Hence, whenever~\eqref{eq:davis_yin_fixed_point_residual_bound} holds, one obtains $\norm{a + b + C(x)} \leq \epsilon$, showing that a single exact \gls*{DY} step from the data-driven output $\hat{z}_K$ yields points $x$ and $y$  that are within $\gamma \epsilon$ of each other and whose associated operator values satisfy the inclusion~\eqref{eq:monotone_inclusion} up to an error of size $\epsilon$. %
\end{remark}

In the next section, we will analyze the data-driven \gls*{FPI}~\eqref{eq:data_driven_davis_yin_splitting} from a statistical learning-theoretic perspective to provide quantitative answers to the research question above.%

\section{Main results}
\label{sec:main_results}

We next establish distribution-free, finite-sample certificates on the quality of the output $\hat{z}_K$ returned by the data-driven \gls*{DY} splitting~\eqref{eq:data_driven_davis_yin_splitting}. To this end, we adopt a statistical learning perspective and interpret the $K$-step data-driven \gls*{FPI}~\eqref{eq:data_driven_davis_yin_splitting} as a learning algorithm $\mathcal{A}_{\mathrm{DY}} : \Xi^N \to \mathcal{H}_{\mathrm{DY}}$ mapping a dataset $\mathcal{D} \in \Xi^N$ to a hypothesis $\omega_K = \mathcal{A}_{\mathrm{DY}}(\mathcal{D})\in\mathcal{H}_{\mathrm{DY}} \subseteq \R^{4n}$, with
\begin{equation*}
\begin{aligned}
    &\mathcal{H}_{\mathrm{DY}}=\\
    &\{\omega = (x, y, z, \zeta) \in \R^{4n}  : x \! = \! J_{\gamma B}(z), ~ y \! = \! J_{\gamma A}(2x \! - \! z \! - \! \gamma \zeta)\}\,.
\end{aligned}
\end{equation*}

Specifically, we define the output of $\mathcal{A}_{\mathrm{DY}}$ by $\omega_K=(\hat{x}_K, \hat{y}_K, \hat{z}_K, \hat{\zeta}_K)$, where $\hat{x}_K$,  $\hat{y}_K$, and $\hat{z}_K$ are the iterates generated by the $K$-step data-driven \gls*{FPI}~\eqref{eq:data_driven_davis_yin_splitting}, and $\hat{\zeta}_K = \widehat{C}_{\mathcal{D}}(\hat{x}_K)$.

Our first result upper bounds the unknown exact \gls*{DY} fixed-point residual $\rho_{\mathrm{DY}}(\hat{z}_K)$ associated with the hypothesis $\omega_K = \mathcal{A}_{\mathrm{DY}}(\mathcal{D})$ in terms of the corresponding empirical fixed-point residual $\hat{\rho}_{\mathcal{D}}(\hat{z}_K) = \|\hat{z}_{K} - \widehat{T}_{\mathcal{D}}(\hat{z}_{K})\|$ and a statistical error term measuring how well the \gls*{SAA} operator $\widehat{C}_{\mathcal{D}}$ in~\eqref{eq:data_driven_approximation_C_saa} approximates the true operator $C$ at the data-dependent point $\hat{x}_K$:

\begin{proposition}
\label{prop:davis_yin_true_residual_decomposition} 
    For any training dataset $\mathcal{D} \in \Xi^N$, the true \gls*{DY} fixed-point residual associated with the hypothesis $\omega_K = \mathcal{A}_{\mathrm{DY}}(\mathcal{D})$ is upper bounded by%
    \begin{equation}
    \label{eq:davis_yin_true_residual_decomposition} 
        \rho_{\mathrm{DY}}(\hat{z}_K) \leq \hat{\rho}_{\mathcal{D}}(\hat{z}_K) + \gamma \|\Delta_{\mathcal{D}}(\hat{x}_K)\|\,,
    \end{equation}
    where $\Delta_{\mathcal{D}}(\hat{x}_K) = \widehat{C}_{\mathcal{D}}(\hat{x}_K) - C(\hat{x}_K) = \hat{\zeta}_K - C(\hat{x}_K)$. 
\end{proposition}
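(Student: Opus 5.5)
The plan is to combine the triangle inequality with the nonexpansiveness of the resolvent $J_{\gamma A}$. We add and subtract the empirical operator inside the true residual:
\begin{equation*}
    \rho_{\mathrm{DY}}(\hat{z}_K) = \norm{\hat{z}_K - T_{\mathrm{DY}}(\hat{z}_K)} \leq \norm{\hat{z}_K - \widehat{T}_{\mathcal{D}}(\hat{z}_K)} + \norm{\widehat{T}_{\mathcal{D}}(\hat{z}_K) - T_{\mathrm{DY}}(\hat{z}_K)}\,.
\end{equation*}
The first term is $\hat{\rho}_{\mathcal{D}}(\hat{z}_K)$ by definition. This reduces the claim to showing that the second term is at most $\gamma \norm{\Delta_{\mathcal{D}}(\hat{x}_K)}$.

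For the second term, we expand both operators using their definitions. The two operators share the common part $I - J_{\gamma B}$, so this part cancels. Since $\hat{x}_K = J_{\gamma B}(\hat{z}_K)$ by~\eqref{eq:data_driven_davis_yin_splitting_x}, what remains is
\begin{equation*}
    \widehat{T}_{\mathcal{D}}(\hat{z}_K) - T_{\mathrm{DY}}(\hat{z}_K) = J_{\gamma A}\bigl(2\hat{x}_K - \hat{z}_K - \gamma \widehat{C}_{\mathcal{D}}(\hat{x}_K)\bigr) - J_{\gamma A}\bigl(2\hat{x}_K - \hat{z}_K - \gamma C(\hat{x}_K)\bigr)\,.
\end{equation*}
Because $A$ is maximally monotone, $J_{\gamma A}$ is single-valued and firmly nonexpansive on $\R^n$, and in particular $1$-Lipschitz~\cite{bauschke2017convex}. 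The norm of this difference is therefore bounded by the distance between the two arguments. That distance is $\gamma \norm{\widehat{C}_{\mathcal{D}}(\hat{x}_K) - C(\hat{x}_K)} = \gamma\norm{\Delta_{\mathcal{D}}(\hat{x}_K)}$, and by definition $\hat{\zeta}_K = \widehat{C}_{\mathcal{D}}(\hat{x}_K)$. Combining this with the triangle inequality gives~\eqref{eq:davis_yin_true_residual_decomposition}.

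There is no real obstacle here. The only point needing care is that both operators are evaluated at the same $\hat{z}_K$, and hence at the same $\hat{x}_K = J_{\gamma B}(\hat{z}_K)$. This is what lets the $J_{\gamma B}$ terms cancel exactly, so that the only discrepancy is the forward term inside $J_{\gamma A}$. The argument is deterministic and holds for every $\mathcal{D}$. It uses none of the oracle assumptions, only the maximal monotonicity of $A$ and $B$, which makes the resolvents well defined.
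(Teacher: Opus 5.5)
Your proposal is correct and follows essentially the same route as the paper: the triangle inequality through $\widehat{T}_{\mathcal{D}}(\hat{z}_K)$, the observation that both operators evaluate $J_{\gamma B}$ at the same point $\hat{z}_K$ (so only the $J_{\gamma A}$ terms differ), and nonexpansiveness of $J_{\gamma A}$. Together these give the $\gamma\|\Delta_{\mathcal{D}}(\hat{x}_K)\|$ term.
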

\begin{proof}
    By definition of $\rho_{\mathrm{DY}}(\hat{z}_K)$ in~\eqref{eq:davis_yin_fixed_point_residual_bound} and using the triangle inequality, we have that
    \begin{align}
        \nonumber
        \rho_{\mathrm{DY}}(\hat{z}_K) &= \|\hat{z}_{K} - \widehat{T}_{\mathcal{D}}(\hat{z}_K) + \widehat{T}_{\mathcal{D}}(\hat{z}_K) - T_{\mathrm{DY}}(\hat{z}_{K})\| \\
        \nonumber
        &\leq \|\hat{z}_{K} - \widehat{T}_{\mathcal{D}}(\hat{z}_K)\| + \|\widehat{T}_{\mathcal{D}}(\hat{z}_K) - T_{\mathrm{DY}}(\hat{z}_{K})\| \\
        \label{eq:davis_yin_true_residual_decomposition_intermediate} 
        &= \hat{\rho}_{\mathcal{D}}(\hat{z}_K) + \|\hat{y}_K - \hat{x}_K - (y - x)\|\,,
    \end{align}
    where $x = J_{\gamma B}(\hat{z}_K)$ and $y = J_{\gamma A}(2x - \hat{z}_K - \gamma C(x))$ denote the points generated through~\eqref{eq:davis_yin_splitting_x} and~\eqref{eq:davis_yin_splitting_y}, respectively. Since  $J_{\gamma B}$ is available by assumption, the resolvent evaluations in~\eqref{eq:davis_yin_splitting_x} and~\eqref{eq:data_driven_davis_yin_splitting_x} coincide at $\hat{z}_K$, yielding $\hat{x}_K =x$ in~\eqref{eq:davis_yin_true_residual_decomposition_intermediate}. Finally, since the resolvent $J_{\gamma A}$ of the maximal monotone operator $A$ is (firmly) nonexpansive by \cite[Corollary~23.9]{bauschke2017convex}, we have that
    \begin{equation*}
        \| \hat{y}_K - y \| \leq \| (2\hat{x}_K - \hat{z}_K - \gamma \widehat{C}_{\mathcal{D}}(\hat{x}_K)) - (2x - \hat{z}_K - \gamma C(x))\|\,,
    \end{equation*}
    from which the bound~\eqref{eq:davis_yin_true_residual_decomposition} follows.
\end{proof}

Since the empirical fixed-point residual $\hat{\rho}(\hat{z}_K)$ is computable, Proposition~\ref{prop:davis_yin_true_residual_decomposition} shows that certifying the true residual $\rho_{\mathrm{DY}}(\hat{z}_K)$ reduces to controlling the \gls*{SAA} error $\|\Delta_{\mathcal{D}}(\hat{x}_K)\|$. The main statistical difficulty is that the evaluation point $\hat{x}_K$ is computed from the same dataset $\mathcal{D}$ used to define $\widehat{C}_{\mathcal{D}}$ in~\eqref{eq:data_driven_approximation_C_saa}. Therefore, unlike the case where $\Delta_{\mathcal{D}}$ is evaluated at a deterministic point $x \in \R^n$, the empirical average
\begin{equation}
\label{eq:saa_at_data_dependent_final_iterate_without_norm}
    \Delta_{\mathcal{D}}(\hat{x}_K) = \frac{1}{N} \sum_{i = 1}^N O(\hat{x}_K, \xi^i) - C(\hat{x}_K)\,,
\end{equation}
involves random vectors that are generally neither independent nor individually centered \gls*{wrt} the random draw of $\mathcal{D} \sim \mathbb{P}^N$.

To address this challenge, we bound next the sensitivity of the hypothesis $\omega_K = \mathcal{A}_{\mathrm{DY}}(\mathcal{D})$ to changes in the training dataset $\mathcal{D}$. Specifically, let $\mathcal{D}^i$ denote the dataset obtained by replacing the $i$-th sample in $\mathcal{D}$, $i \in \{1, \dots, N\}$, with an arbitrary $\tilde{\xi}^i \in \Xi$, which is \gls*{iid} \gls*{wrt} samples contained in $\mathcal{D}\setminus\{\xi^i\}$. We then have the following result.%

\begin{proposition}
    \label{prop:one_sample_replacement_sensitivity}
    Let $\gamma \in (0, 2\theta)$. For any pair of neighboring datasets $\mathcal{D} \in \Xi^N$ and $\mathcal{D}^i$, with $i \in \{1, \dots, N\}$, the respective hypotheses $\omega_K = \mathcal{A}_{\mathrm{DY}}(\mathcal{D})$ and $\omega_K^{i} = \mathcal{A}_{\mathrm{DY}}(\mathcal{D}^{i})$ satisfy
    \begin{equation}
    \label{eq:one_sample_replacement_sensitivity}
        \|\hat{x}_K - \hat{x}_K^{i}\| \leq \|\hat{z}_K - \hat{z}_K^{i}\| \leq \frac{\gamma\varsigma K}{N}\,,
    \end{equation}
    whenever their initial points $\hat{z}_0 \in \R^n$ and $\hat{z}_0^i \in \R^n$ coincide. 
\end{proposition}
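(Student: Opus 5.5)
The plan is to compare the two data-driven trajectories $\{\hat z_t\}$ and $\{\hat z_t^i\}$ step by step, bounding their gap by the sum of a nonexpansive part and a perturbation part. The first inequality in \eqref{eq:one_sample_replacement_sensitivity} is immediate, since $\hat x_K = J_{\gamma B}(\hat z_K)$ and $\hat x_K^i = J_{\gamma B}(\hat z_K^i)$, and $J_{\gamma B}$ is nonexpansive by \cite[Corollary~23.9]{bauschke2017convex}. For the second inequality, I will show the one-step recursion
\begin{equation*}
\|\hat z_{t+1} - \hat z_{t+1}^i\| \leq \|\hat z_t - \hat z_t^i\| + \frac{\gamma\varsigma}{N}\,,
\end{equation*}
and then unroll it from $\hat z_0 = \hat z_0^i$ to obtain $\gamma\varsigma K/N$.

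To obtain the recursion, I write
\begin{equation*}
\hat z_{t+1} - \hat z_{t+1}^i = \widehat T_{\mathcal{D}}(\hat z_t) - \widehat T_{\mathcal{D}^i}(\hat z_t^i) = \bigl(\widehat T_{\mathcal{D}}(\hat z_t) - \widehat T_{\mathcal{D}}(\hat z_t^i)\bigr) + \bigl(\widehat T_{\mathcal{D}}(\hat z_t^i) - \widehat T_{\mathcal{D}^i}(\hat z_t^i)\bigr)\,.
\end{equation*}
The two terms are handled separately.

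\emph{First term.} Under Assumption~\ref{ass:oracle_theta_cocoercive}, each $O(\cdot,\xi)$ is $\theta$-cocoercive. An average of $\theta$-cocoercive maps is again $\theta$-cocoercive. This follows from linearity of the inner product together with convexity of $\|\cdot\|^2$ (Jensen's inequality):
\begin{equation*}
\Bigl\|\tfrac1N\sum_j v_j\Bigr\|^2 \leq \tfrac1N\sum_j\|v_j\|^2\,.
\end{equation*}
Hence $\widehat C_{\mathcal{D}}$ is $\theta$-cocoercive. Applying \cite[Proposition~3.1]{davis2017three} with $C$ replaced by $\widehat C_{\mathcal{D}}$ and $\gamma\in(0,2\theta)$ then shows that $\widehat T_{\mathcal{D}}$ is averaged, hence nonexpansive. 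The first term is therefore bounded by $\|\hat z_t - \hat z_t^i\|$.

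\emph{Second term.} Both operators are evaluated at the same point $w = \hat z_t^i$, so they share $x = J_{\gamma B}(w)$, and the identity and $J_{\gamma B}$ parts cancel. What remains is the difference of two $J_{\gamma A}$ evaluations. By nonexpansiveness of $J_{\gamma A}$, this is at most
\begin{equation*}
\gamma\,\|\widehat C_{\mathcal{D}}(x) - \widehat C_{\mathcal{D}^i}(x)\| = \frac{\gamma}{N}\,\|O(x,\xi^i) - O(x,\tilde\xi^i)\| \leq \frac{\gamma\varsigma}{N}\,,
\end{equation*}
by Assumption~\ref{ass:oracle_replacement-sensitivity}. This is exactly the argument in the proof of Proposition~\ref{prop:davis_yin_true_residual_decomposition}.

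The main obstacle is the first term, namely justifying that $\widehat T_{\mathcal{D}}$ is nonexpansive. This requires transferring cocoercivity from the individual oracle maps to the sample average, and then checking that the averagedness result of Davis--Yin applies verbatim with the same $\theta$ and the same step-size range. Everything else is triangle inequalities and telescoping.
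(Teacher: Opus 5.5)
Your proposal is correct and follows the paper's proof essentially step for step. You show $\theta$-cocoercivity of $\widehat C_{\mathcal{D}}$ via Jensen and deduce averagedness (hence nonexpansiveness) of $\widehat T_{\mathcal{D}}$ from Davis--Yin; you use the same triangle-inequality split, with the perturbation term bounded by $\gamma\varsigma/N$ through nonexpansiveness of $J_{\gamma A}$ and Assumption~\ref{ass:oracle_replacement-sensitivity}, then induct from the common initial point and finish the first inequality with nonexpansiveness of $J_{\gamma B}$.
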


\begin{proof}
    We begin by showing that, for any training dataset $\mathcal{D} \in \Xi^N$, the operator $\widehat{C}_{\mathcal{D}}$ in~\eqref{eq:data_driven_approximation_C_saa} is $\theta$-cocoercive. For any $x, y \in \R^n$:
    \begin{align*}
        &\langle \widehat{C}_{\mathcal{D}}(x) - \widehat{C}_{\mathcal{D}}(y), x - y \rangle = 
        \frac{1}{N} \sum_{i = 1}^{N} \langle O(x, \xi^i) - O(y, \xi^i), x - y \rangle\\
        &~ \overset{(a)}{\geq} \frac{\theta}{N} \sum_{i = 1}^{N} \| O(x, \xi^i) - O(y, \xi^i) \|^2 \\
        &~ \overset{(b)}{\geq} \theta \norm{\frac{1}{N} \sum_{i = 1}^{N} O(x, \xi^i) - O(y, \xi^i)}^2 = \theta \| \widehat{C}_{\mathcal{D}}(x) - \widehat{C}_{\mathcal{D}}(y) \|^2\,,
    \end{align*}
    where $(a)$ follows from $\theta$-cocoercivity of the oracle $O(\cdot, \xi)$, for all $\xi \in \Xi$ as per Assumption~\ref{ass:oracle_theta_cocoercive}, and $(b)$ from Jensen's inequality. Hence, when $\gamma \in (0, 2\theta)$, we have that the empirical \gls*{DY} fixed-point operator $\widehat{T}_{\mathcal{D}}$ is averaged, and thus nonexpansive, for every $\mathcal{D} \in \Xi^N$ by \cite[Proposition~3.1]{davis2017three}. Thus, for any $t \in \N$ the triangle inequality leaves us with:
    \begin{align}
        \nonumber
        &\|\hat{z}_{t+1} - \hat{z}_{t+1}^{i}\| = \|\widehat{T}_{\mathcal{D}}(\hat{z}_{t}) - \widehat{T}_{\mathcal{D}^i}(\hat{z}_{t}^{i})\| \\
        \nonumber
        &\qquad \leq \|\widehat{T}_{\mathcal{D}}(\hat{z}_{t}) - \widehat{T}_{\mathcal{D}}(\hat{z}_{t}^{i})\| + \| \widehat{T}_{\mathcal{D}}(\hat{z}_{t}^{i}) - \widehat{T}_{\mathcal{D}^i}(\hat{z}_{t}^{i})\|\\
        \label{eq:one_sample_replacement_sensitivity_step_1}
        &\qquad \leq \|\hat{z}_{t} - \hat{z}_{t}^{i}\| + \|\widehat{T}_{\mathcal{D}}(\hat{z}_{t}^{i}) - \widehat{T}_{\mathcal{D}^i}(\hat{z}_{t}^{i})\|\,.
    \end{align}
    Proceeding as in the proof of Proposition~\ref{prop:davis_yin_true_residual_decomposition}, since the operators $\widehat{T}_{\mathcal{D}}$ and $\widehat{T}_{\mathcal{D}^i}$ only differ in the argument of the second resolvent evaluation, where $\widehat{C}_{\mathcal{D}}$ is replaced by $\widehat{C}_{\mathcal{D}^i}$, we then have
    \begin{align}
        \nonumber
        &\lVert\widehat{T}_{\mathcal{D}}(\hat{z}_{t}^{i}) - \widehat{T}_{\mathcal{D}^i}(\hat{z}_{t}^{i})\rVert \leq \gamma \| \widehat{C}_{\mathcal{D}}(\hat{x}_t^i) - \widehat{C}_{\mathcal{D}^{i}}(\hat{x}_t^i)\| \\
        \nonumber
        &\quad = \frac{\gamma}{N} \norm{\sum_{j = 1}^{N} O(\hat{x}_t^i, \xi^j) - \sum_{j = 1, j \neq i}^{N} O(\hat{x}_t^i, \xi^j) - O(\hat{x}_t^i, \tilde{\xi}^i)} \\
        \label{eq:one_sample_replacement_sensitivity_step_2}
        &\quad = \frac{\gamma}{N} \norm{ O(\hat{x}_t^i, \xi^i) -O(\hat{x}_t^i, \tilde{\xi}^i)} \leq \frac{\gamma \varsigma}{N}\,,
    \end{align}
    by the nonexpansiveness of $J_{\gamma A}$, the fact that the training datasets $\mathcal{D}$ and $\mathcal{D}^i$ only differ in the $i$-th sample, and the upper bound on the oracle sensitivity given by Assumption~\ref{ass:oracle_replacement-sensitivity}. By combining~\eqref{eq:one_sample_replacement_sensitivity_step_1} and~\eqref{eq:one_sample_replacement_sensitivity_step_2}, the rightmost inequality in~\eqref{eq:one_sample_replacement_sensitivity} is then obtained by induction, since $\omega_K$ and $\omega_K^i$ are generated starting from the same initial point $\hat{z}_0 = \hat{z}_0^i \in \R^n$. Finally, the proof is concluded by observing that
    \begin{equation*}
        \|\hat{x}_K - \hat{x}_K^{i}\| = \|J_{\gamma B}(\hat{z}_K) - J_{\gamma B}(\hat{z}_K^{i})\| \leq \|\hat{z}_K - \hat{z}_K^{i}\|\,,
    \end{equation*}
    since $J_{\gamma B}$ is nonexpansive by \cite[Corollary~23.9]{bauschke2017convex}.
\end{proof}
Let $\eta_K = \max_{i \in \{1, \dots, N\}} ~ \sup_{\mathcal{D}, \mathcal{D}^i} ~ \|\hat{x}_K - \hat{x}_K^{i}\|$ be the maximum distance between the $x$-components of the hypotheses $\omega_K$ and $\omega_K^i$ corresponding to any two neighboring datasets $\mathcal{D}$ and $\mathcal{D}^i$. 
Proposition~\ref{prop:one_sample_replacement_sensitivity} shows that $\eta_K \leq \frac{\gamma\varsigma K}{N}$, that is, $\eta_K$ is bounded, grows at most linearly with the iteration budget $K$, and decays at least as fast as $\frac{1}{N}$ with the number of samples in $\mathcal{D}$.
In what follows, we leverage the replace-one sensitivity bound $\eta_K$ to derive two different distribution-free, finite-sample certificates for the \gls*{SAA} error term $\|\Delta_{\mathcal{D}}(\hat{x}_K)\|$ in~\eqref{eq:davis_yin_true_residual_decomposition} using algorithmic stability arguments~\cite{bousquet2002stability, kutin2002almost, bousquet2020sharper} and martingale inequalities~\cite{boucheron2013concentration, doob1953stochastic, pinelis1994optimum}.

\subsection{A posteriori certificates via algorithmic stability}

In this subsection, we show that the \gls*{SAA} error $\|\Delta_{\mathcal{D}}(\hat{x}_K)\|$ is upper bounded by the population average, evaluated at $\omega_K$, of the scalar loss function $\ell_{\Delta} : \mathcal{H}_{\mathrm{DY}} \times \Xi \to \R$ defined by 
\begin{equation}
\label{eq:davis_yin_scalar_loss_function}
    \ell_{\Delta}(\omega, \xi) = \|\zeta - O(x, \xi)\|\,.    
\end{equation}
Leveraging the replace-one sensitivity bound of Proposition~\ref{prop:one_sample_replacement_sensitivity} and inspired by~\cite{fabiani2026finite}, we then prove uniform stability~\cite{bousquet2002stability} of $\mathcal{A}_{\mathrm{DY}}$ \gls*{wrt} $\ell_{\Delta}$, which in turn yields a posteriori finite-sample guarantees for the data-driven \gls*{FPI}~\eqref{eq:data_driven_davis_yin_splitting}.

The following lemma formalizes the first step.
\begin{lemma}
\label{le:saa_error_upper_bounded_by_population_risk}
    For any training dataset $\mathcal{D} \in \Xi^N$, the \gls*{SAA} error at the $x$-component of the  hypothesis $\omega_K = \mathcal{A}_{\mathrm{DY}}(\mathcal{D})$ satisfies
    \begin{equation*}
    \label{eq:saa_error_upper_bounded_by_population_risk}
        \| \Delta_{\mathcal{D}}(\hat{x}_K) \| \leq \mathbb{E}_{\xi \sim \mathbb{P}}[\ell_{\Delta}(\omega_K, \xi)]\,.
    \end{equation*}
\end{lemma}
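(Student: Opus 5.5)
The plan is to write $\Delta_{\mathcal{D}}(\hat{x}_K)$ as a population expectation evaluated at the fixed (conditioned-on) hypothesis, and then pull the norm inside via Jensen's inequality. Throughout, fix the dataset $\mathcal{D}$, so that $\omega_K = (\hat{x}_K, \hat{y}_K, \hat{z}_K, \hat{\zeta}_K)$ is a deterministic point of $\mathcal{H}_{\mathrm{DY}}$. The expectation $\mathbb{E}_{\xi \sim \mathbb{P}}$ in the statement is over a fresh sample $\xi$ independent of $\mathcal{D}$, with $\omega_K$ held fixed.

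First, by the definition $\hat{\zeta}_K = \widehat{C}_{\mathcal{D}}(\hat{x}_K)$, we have $\Delta_{\mathcal{D}}(\hat{x}_K) = \hat{\zeta}_K - C(\hat{x}_K)$. By Assumption~\ref{ass:oracle_unbiased}, applied at the fixed point $x = \hat{x}_K$, we have $C(\hat{x}_K) = \mathbb{E}_{\xi \sim \mathbb{P}}[O(\hat{x}_K, \xi)]$. Since $\hat{\zeta}_K$ does not depend on $\xi$, this gives
\begin{equation*}
    \Delta_{\mathcal{D}}(\hat{x}_K) = \mathbb{E}_{\xi \sim \mathbb{P}}\left[\hat{\zeta}_K - O(\hat{x}_K, \xi)\right].
\end{equation*}

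Second, the Euclidean norm is convex, so Jensen's inequality yields
\begin{equation*}
    \norm{\mathbb{E}_{\xi}\left[\hat{\zeta}_K - O(\hat{x}_K, \xi)\right]} \leq \mathbb{E}_{\xi}\left[\norm{\hat{\zeta}_K - O(\hat{x}_K, \xi)}\right].
\end{equation*}
The right-hand side is exactly $\mathbb{E}_{\xi}[\ell_{\Delta}(\omega_K, \xi)]$ by the definition~\eqref{eq:davis_yin_scalar_loss_function} of $\ell_{\Delta}$, which gives the claim.

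The only step requiring care is integrability, i.e., that the expectations above are finite so that Jensen applies. I would handle it through Assumption~\ref{ass:oracle_replacement-sensitivity}. Fix any $\xi_0 \in \Xi$. Then $\norm{O(\hat{x}_K, \xi)} \leq \norm{O(\hat{x}_K, \xi_0)} + \varsigma$ for every $\xi$, so $O(\hat{x}_K, \cdot)$ is bounded and hence integrable. Measurability of $O(x, \cdot)$ is implicitly assumed throughout the paper, as it is already needed for Assumption~\ref{ass:oracle_unbiased} to make sense. The subtle conceptual point, rather than a real obstacle, is being clear that the expectation treats $\omega_K$ as fixed and does not average over $\mathcal{D}$. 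That is precisely why this bound is an a posteriori, conditional statement.
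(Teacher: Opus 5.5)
Your proposal is correct and follows essentially the same argument as the paper: hold $\omega_K$ fixed, use Assumption~\ref{ass:oracle_unbiased} to rewrite $\Delta_{\mathcal{D}}(\hat{x}_K)$ as $\mathbb{E}_{\xi \sim \mathbb{P}}[\hat{\zeta}_K - O(\hat{x}_K, \xi)]$, then apply Jensen's inequality to the norm. Your integrability remark via Assumption~\ref{ass:oracle_replacement-sensitivity} is a harmless refinement that the paper leaves implicit, since Assumption~\ref{ass:oracle_unbiased} already presupposes the expectation exists.
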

\begin{proof}
    By definition of $\Delta_{\mathcal{D}}(\hat{x}_K)$ in Proposition~\ref{prop:davis_yin_true_residual_decomposition} and since the stochastic oracle $O$ is unbiased by Assumption~\ref{ass:oracle_unbiased}, we have
    \begin{equation*}
        \Delta_{\mathcal{D}}(\hat{x}_K) = \hat{\zeta}_K - C(\hat{x}_K) = \hat{\zeta}_K - \mathbb{E}_{\xi \sim \mathbb{P}}[O(\hat{x}_K, \xi)]\,.
    \end{equation*}
    Moreover, since the hypothesis $\omega_K = \mathcal{A}_{\mathrm{DY}}(\mathcal{D})$ is fixed when taking the expectation over an independent test sample $\xi \sim \mathbb{P}$ and applying Jensen's inequality, we obtain
    \begin{align}
        \label{eq:saa_error_upper_bounded_by_population_risk_jensen_conservative_step}
        \| \Delta_{\mathcal{D}}(\hat{x}_K) \| &= \|\mathbb{E}_{\xi \sim \mathbb{P}}[\hat{\zeta}_K - O(\hat{x}_K, \xi)] \| \\
        \nonumber
        &\leq \mathbb{E}_{\xi \sim \mathbb{P}}[\| \hat{\zeta}_K - O(\hat{x}_K, \xi) \|] = \mathbb{E}_{\xi \sim \mathbb{P}}[\ell_{\Delta}(\omega_K, \xi)]\,,
    \end{align}
    and the claim follows.
\end{proof}

Together with
Proposition~\ref{prop:davis_yin_true_residual_decomposition}, Lemma~\ref{le:saa_error_upper_bounded_by_population_risk} yields the bound
\begin{equation}
\label{eq:true_residual_upper_bounded_by_empirical_residual_plus_population_risk}
    \rho_{\mathrm{DY}}(\hat z_K) \leq \hat\rho(\hat z_K) + \gamma \mathbb E_{\xi\sim\mathbb P}
    [ \ell_{\Delta}(\omega_K,\xi) ]\,,
\end{equation}
for every hypothesis $\omega_K=\mathcal A_{\mathrm{DY}}(\mathcal D)$. As the probability distribution $\mathbb{P}$ of $\xi$ is unknown, however, the bound above proves not practical as the population average in the right-hand side of~\eqref{eq:true_residual_upper_bounded_by_empirical_residual_plus_population_risk} cannot be evaluated directly. With the goal of using standard generalization bounds to control the gap between the population average in~\eqref{eq:true_residual_upper_bounded_by_empirical_residual_plus_population_risk}
and its empirical counterpart, our next result propagates the replace-one sensitivity estimate $\eta_K$ from the iterate level to the loss level.

\begin{lemma}
\label{le:davis_yin_uniform_stability}
    Let $\gamma \in (0, 2\theta)$. For any pair of neighboring datasets $\mathcal{D} \in \Xi^N$ and $\mathcal{D}^i$, $i \in \{1, \dots, N\}$ and any $\xi \in \Xi$, the respective hypotheses $\omega_K = \mathcal{A}_{\mathrm{DY}}(\mathcal{D})$ and $\omega_K^{i} = \mathcal{A}_{\mathrm{DY}}(\mathcal{D}^{i})$ satisfy
    \begin{equation}
    \label{eq:davis_yin_uniform_stability}
        | \ell_{\Delta}(\omega_K, \xi) - \ell_{\Delta}(\omega_K^i, \xi) | \leq \frac{2\eta_K}{\theta} + \frac{\varsigma}{N} =: \beta_K\,, %
    \end{equation}
    whenever their initial points $z_0 \in \R^n$ and $z_0^i \in \R^n$ coincide. 
\end{lemma}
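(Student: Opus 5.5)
We bound the loss difference through the reverse triangle inequality and then control each resulting term with the replace-one sensitivity of Proposition~\ref{prop:one_sample_replacement_sensitivity}. By definition of $\ell_{\Delta}$ in~\eqref{eq:davis_yin_scalar_loss_function},
\begin{equation*}
|\ell_{\Delta}(\omega_K,\xi)-\ell_{\Delta}(\omega_K^i,\xi)| \leq \|\hat{\zeta}_K-\hat{\zeta}_K^i\| + \|O(\hat{x}_K,\xi)-O(\hat{x}_K^i,\xi)\|\,.
\end{equation*}
The two terms are bounded separately. Both rely on the fact that a $\theta$-cocoercive map is $\frac{1}{\theta}$-Lipschitz, which follows from Cauchy--Schwarz: $\theta\|T(x)-T(y)\|^2\leq \langle x-y,T(x)-T(y)\rangle\leq \|x-y\|\,\|T(x)-T(y)\|$.

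For the oracle term, Assumption~\ref{ass:oracle_theta_cocoercive} makes $O(\cdot,\xi)$ $\frac{1}{\theta}$-Lipschitz. Hence
\begin{equation*}
\|O(\hat{x}_K,\xi)-O(\hat{x}_K^i,\xi)\|\leq \frac{1}{\theta}\|\hat{x}_K-\hat{x}_K^i\|\leq \frac{\eta_K}{\theta}\,,
\end{equation*}
where the last step uses the definition of $\eta_K$, together with Proposition~\ref{prop:one_sample_replacement_sensitivity} for equal initial points.

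For the $\zeta$-term, we write $\hat{\zeta}_K-\hat{\zeta}_K^i = \widehat{C}_{\mathcal{D}}(\hat{x}_K)-\widehat{C}_{\mathcal{D}^i}(\hat{x}_K^i)$ and insert the mixed quantity $\widehat{C}_{\mathcal{D}}(\hat{x}_K^i)$. This gives two pieces.
\begin{itemize}
\item The first piece, $\|\widehat{C}_{\mathcal{D}}(\hat{x}_K)-\widehat{C}_{\mathcal{D}}(\hat{x}_K^i)\|$, is at most $\eta_K/\theta$. In the proof of Proposition~\ref{prop:one_sample_replacement_sensitivity} we showed that $\widehat{C}_{\mathcal{D}}$ is $\theta$-cocoercive, hence $\frac{1}{\theta}$-Lipschitz.
\item The second piece, $\|\widehat{C}_{\mathcal{D}}(\hat{x}_K^i)-\widehat{C}_{\mathcal{D}^i}(\hat{x}_K^i)\| = \frac{1}{N}\|O(\hat{x}_K^i,\xi^i)-O(\hat{x}_K^i,\tilde{\xi}^i)\|$, is at most $\varsigma/N$. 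This is exactly as in~\eqref{eq:one_sample_replacement_sensitivity_step_2}, using Assumption~\ref{ass:oracle_replacement-sensitivity}.
\end{itemize}
Summing all contributions yields $\frac{2\eta_K}{\theta}+\frac{\varsigma}{N}=\beta_K$, as claimed.

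There is no real obstacle here. The only point needing care is the choice of the intermediate point in the $\zeta$-term. Evaluating the two empirical operators at the same point $\hat{x}_K^i$ isolates the single replaced sample, which contributes $\varsigma/N$. Lipschitz continuity absorbs the displacement of the evaluation point, which contributes $\eta_K/\theta$.
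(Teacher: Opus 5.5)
Your proposal is correct and follows the paper's proof essentially step for step. You use the same reverse-triangle/triangle split into the $\zeta$-term and the oracle term, the same $\frac{1}{\theta}$-Lipschitz bound from cocoercivity, and the same intermediate point $\widehat{C}_{\mathcal{D}}(\hat{x}_K^i)$ to isolate the replaced sample; the only cosmetic difference is that you get Lipschitzness of $\widehat{C}_{\mathcal{D}}$ from its cocoercivity in Proposition~\ref{prop:one_sample_replacement_sensitivity}, whereas the paper notes that an average of $\frac{1}{\theta}$-Lipschitz maps is $\frac{1}{\theta}$-Lipschitz.
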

\begin{proof}
    By definition of $\ell_{\Delta}$ in~\eqref{eq:davis_yin_scalar_loss_function} and by the reverse triangle inequality followed by the triangle inequality, we first upper bound the left-hand side of~\eqref{eq:davis_yin_uniform_stability} by
    \begin{align}
        \nonumber
        &\| \hat{\zeta}_K - O(\hat{x}_K, \xi) - ( \hat{\zeta}_K^i - O(\hat{x}_K^i, \xi) )\|\\ 
        \label{eq:davis_yin_uniform_stability_step_1}
        &\qquad \leq \| \hat{\zeta}_K - \hat{\zeta}_K^i\| + \| O(\hat{x}_K, \xi) - O(\hat{x}_K^i, \xi)\| \,.
    \end{align}
    Recall that the stochastic oracle $O$ is $\theta$-cocoercive by Assumption~\ref{ass:oracle_theta_cocoercive}. Since $\theta$-cocoercivity implies $\frac{1}{\theta}$-Lipschitzness, see, e.g, \cite[Remark~4.15]{bauschke2017convex} and by definition of $\eta_K$, we then have
    \begin{equation}
        \label{eq:davis_yin_uniform_stability_step_2}
        \| O(\hat{x}_K, \xi) - O(\hat{x}_K^i, \xi)\| \leq \frac{1}{\theta} \| \hat{x}_K - \hat{x}_K^i\| \leq \frac{\eta_K}{\theta}\,,
    \end{equation}
    Similarly, by~\eqref{eq:data_driven_approximation_C_saa} and since the average of $\frac{1}{\theta}$-Lipschitz functions is itselft $\frac{1}{\theta}$-Lipschitz, we obtain 
    \begin{align}
        \nonumber
        &\| \hat{\zeta}_K - \hat{\zeta}_K^i\| = \| \widehat{C}_\mathcal{D}(\hat{x}_K) - \widehat{C}_{\mathcal{D}^i}(\hat{x}_K^i) \| \\
        \nonumber
        &\qquad \leq \| \widehat{C}_\mathcal{D}(\hat{x}_K) - \widehat{C}_\mathcal{D}(\hat{x}_K^i)\| + \| \widehat{C}_\mathcal{D}(\hat{x}_K^i) - \widehat{C}_{\mathcal{D}^i}(\hat{x}_K^i) \| \\
        \label{eq:davis_yin_uniform_stability_step_3}
        &\qquad \leq \frac{\eta_K}{\theta} + \frac{\varsigma}{N}\,,
    \end{align}
    where the last inequality follows by arguments analogous to those used in~\eqref{eq:one_sample_replacement_sensitivity_step_2}, since the training datasets $\mathcal{D}$ and $\mathcal{D}^i$ only differ in the $i$-th sample. The proof is concluded by combining the bounds in~\eqref{eq:davis_yin_uniform_stability_step_1},~\eqref{eq:davis_yin_uniform_stability_step_2}, and~\eqref{eq:davis_yin_uniform_stability_step_3}.%
\end{proof}

Lemma~\ref{le:davis_yin_uniform_stability} establishes uniform stability of $\mathcal{A}_{\mathrm{DY}}$ \gls*{wrt} the scalar loss function $\ell_{\Delta}$  in~\eqref{eq:davis_yin_scalar_loss_function}, in the sense of~\cite{bousquet2002stability, kutin2002almost, bousquet2020sharper}. Intuitively,~\eqref{eq:davis_yin_uniform_stability} shows that changing one sample $\xi \in \mathcal{D}$ does not significantly alter the output of $\mathcal{A}_{\mathrm{DY}}$ \gls*{wrt} the metric identified by $\ell_{\Delta}$. 

We now convert the underlying deterministic stability certificate into a high-probability generalization bound.

\begin{theorem}
\label{th:finite_sample_guarantees_algorithmic_stability}
    Let $\gamma \in (0, 2\theta)$ and fix any $\delta \in (0,1)$. Then, with probability at least $1 - \delta$ over the random draw $\mathcal{D} \sim \mathbb{P}^N$, the true \gls*{DY} fixed-point residual associated with the hypothesis $\omega_K = \mathcal{A}_{\mathrm{DY}}(\mathcal{D})$ is upper bounded by %
    \begin{equation}
    \label{eq:finite_sample_guarantees_algorithmic_stability}
        \rho_{\mathrm{DY}}(\hat{z}_K) \leq \hat{\rho}_{\mathcal{D}}(\hat{z}_K) + \gamma (\hat{\ell}_{\mathcal{D}}(\omega_K) + \kappa(N, K, \delta))\,,
    \end{equation}
    where $\hat{\ell}_{\mathcal{D}}(\omega_K) = \frac{1}{N}\sum_{i = 1}^N \ell_{\Delta}(\omega_K, \xi^i)$ is the empirical counterpart of the population average in~\eqref{eq:true_residual_upper_bounded_by_empirical_residual_plus_population_risk} and
    \begin{equation}
    \label{eq:generalization_correction_term_algorithmic_stability}
        \kappa(N, K, \delta) = 2 \beta_K + (4N\beta_K + \varsigma)\sqrt{\frac{\operatorname{log}(1/\delta)}{2N}}\,,
    \end{equation}
    with $\beta_K = \frac{2\eta_K}{\theta} + \frac{\varsigma}{N}$, is a generalization correction term.
\end{theorem}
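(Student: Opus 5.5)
The approach is a standard uniform-stability generalization bound of Bousquet--Elisseeff type, applied to the loss $\ell_{\Delta}$ and then plugged into \eqref{eq:true_residual_upper_bounded_by_empirical_residual_plus_population_risk}. Write $R(\mathcal{D}) = \mathbb{E}_{\xi\sim\mathbb{P}}[\ell_\Delta(\omega_K,\xi)]$ for the population loss and $\hat{\ell}_{\mathcal{D}}(\omega_K)$ for its empirical counterpart. By Proposition~\ref{prop:davis_yin_true_residual_decomposition} and Lemma~\ref{le:saa_error_upper_bounded_by_population_risk}, it suffices to show that, with probability at least $1-\delta$,
\[
R(\mathcal{D}) - \hat{\ell}_{\mathcal{D}}(\omega_K) \leq \kappa(N,K,\delta).
\]
I will treat the generalization gap $\Phi(\mathcal{D}) = R(\mathcal{D}) - \hat{\ell}_{\mathcal{D}}(\omega_K)$ with McDiarmid's bounded-differences inequality. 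The proof then splits into bounding $\mathbb{E}[\Phi]$ and bounding the replace-one differences of $\Phi$.

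\emph{Step 1 (expectation).} Using that the $\xi^i$ are i.i.d. and that $\mathcal{A}_{\mathrm{DY}}$ is symmetric in $\mathcal{D}$, rename the test point as a ghost sample $\tilde\xi^i$. This gives
\[
\mathbb{E}[R(\mathcal{D})] = \frac{1}{N}\sum_i \mathbb{E}\bigl[\ell_\Delta(\omega_K^i,\xi^i)\bigr],
\]
where $\omega_K^i$ is trained on $\mathcal{D}^i$, so that $\xi^i$ plays the role of a fresh point. Hence
\[
\mathbb{E}[\Phi] = \frac{1}{N}\sum_i \mathbb{E}\bigl[\ell_\Delta(\omega_K^i,\xi^i)-\ell_\Delta(\omega_K,\xi^i)\bigr] \leq \beta_K
\]
by Lemma~\ref{le:davis_yin_uniform_stability}. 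Since the target $\kappa$ contains $2\beta_K$, this slack is affordable; a cruder route through an intermediate leave-one-out argument would also give $2\beta_K$.

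\emph{Step 2 (bounded differences).} Replace $\xi^i$ by $\tilde\xi^i$. For the population term, $|R(\mathcal{D})-R(\mathcal{D}^i)|\le\beta_K$ directly from Lemma~\ref{le:davis_yin_uniform_stability}. For the empirical term, the $N-1$ summands with $j\ne i$ change by at most $\beta_K/N$ each. The $i$-th summand changes from $\ell_\Delta(\omega_K,\xi^i)$ to $\ell_\Delta(\omega_K^i,\tilde\xi^i)$, which I bound as
\[
|\ell_\Delta(\omega_K,\xi^i)-\ell_\Delta(\omega_K,\tilde\xi^i)| + |\ell_\Delta(\omega_K,\tilde\xi^i)-\ell_\Delta(\omega_K^i,\tilde\xi^i)| \le \varsigma+\beta_K,
\]
using the reverse triangle inequality together with Assumption~\ref{ass:oracle_replacement-sensitivity} for the first piece, and stability for the second. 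Altogether $|\Phi(\mathcal{D})-\Phi(\mathcal{D}^i)|\le 2\beta_K + \varsigma/N$. This is comfortably within $c_i=(4N\beta_K+\varsigma)/N$, which is what the stated $\kappa$ encodes; the factor $4$ just reflects looser bookkeeping, as in~\cite{bousquet2002stability}.

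\emph{Step 3 (concentration).} McDiarmid's inequality gives
\[
\Phi-\mathbb{E}\Phi \le \sqrt{\tfrac{N c^2\log(1/\delta)}{2}} = (4N\beta_K+\varsigma)\sqrt{\tfrac{\log(1/\delta)}{2N}}
\]
with probability at least $1-\delta$. Combining with Step 1 yields $\Phi\le\kappa$, and substituting into \eqref{eq:true_residual_upper_bounded_by_empirical_residual_plus_population_risk} gives \eqref{eq:finite_sample_guarantees_algorithmic_stability}.

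The main obstacle is Step 1: the symmetrization that swaps the training sample with the ghost sample must be justified carefully. This uses exchangeability of $(\xi^i,\tilde\xi^i)$ and permutation invariance of $\mathcal{A}_{\mathrm{DY}}$, which was noted after \eqref{eq:data_driven_davis_yin_splitting}. It also needs the fixed common initialization $\hat z_0$, so that Lemma~\ref{le:davis_yin_uniform_stability} applies. Measurability and integrability of $\ell_\Delta$ are routine, since $\ell_\Delta(\omega_K,\xi)\le\varsigma$ by Assumption~\ref{ass:oracle_replacement-sensitivity} ($\hat\zeta_K$ is an average of oracle values at $\hat x_K$), so all expectations are finite.
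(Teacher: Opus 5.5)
Your proposal is correct and follows the paper's route. Both arguments combine $\beta_K$-uniform stability of $\ell_\Delta$ (Lemma~\ref{le:davis_yin_uniform_stability}) with the bound $0\le\ell_\Delta\le\varsigma$, turn these into a generalization bound, and plug it into~\eqref{eq:true_residual_upper_bounded_by_empirical_residual_plus_population_risk}. The only difference is that the paper invokes \cite[Theorem~3.2]{kutin2002almost} as a black box, whereas you re-derive that bound via ghost-sample symmetrization and McDiarmid's inequality.

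Your derivation actually gives the sharper correction $\beta_K+(2N\beta_K+\varsigma)\sqrt{\log(1/\delta)/(2N)}\le\kappa(N,K,\delta)$. Also, Step~1 needs only exchangeability of $\xi^i$ and $\tilde\xi^i$ at the same index, not permutation invariance of $\mathcal{A}_{\mathrm{DY}}$.
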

\begin{proof}
    The claim follows by applying the results of~\cite[Theorem~3.2]{kutin2002almost} to bound the population average $\mathbb E_{\xi\sim\mathbb P}
    [ \ell_{\Delta}(\omega_K,\xi) ]$ in~\eqref{eq:true_residual_upper_bounded_by_empirical_residual_plus_population_risk}. Since $\mathcal{A}_{\mathrm{DY}}$ is $\beta_K$-uniformly stable \gls*{wrt} $\ell_{\Delta}$ by Lemma~\ref{le:davis_yin_uniform_stability}, we are left with verifying that, for any $\mathcal{D} \in \Xi^N$ and $\xi \in \Xi$, the loss $\ell_{\Delta}(\omega_K, \xi)$ is bounded from above. By the triangle inequality and using Assumption~\ref{ass:oracle_replacement-sensitivity}, we have that
    \begin{align*}
        0 \leq \ell_{\Delta}(\omega_K, \xi) &= \| \hat{\zeta}_K - O(\hat{x}_K, \xi)\| \\
        &= \norm{ \frac{1}{N} \sum_{i=1}^N O(\hat{x}_K, \xi^i) - O(\hat{x}_K, \xi)} \\
        &\leq \frac{1}{N} \sum_{i=1}^N \norm{O(\hat{x}_K, \xi^i) - O(\hat{x}_K, \xi)} \leq \varsigma\,.
    \end{align*}
    Hence,~\cite[Theorem~3.2]{kutin2002almost} applies, concluding the proof.
\end{proof}

Theorem~\ref{th:finite_sample_guarantees_algorithmic_stability} establishes our first finite-sample certificate for the data-driven \gls*{DY} splitting method in~\eqref{eq:data_driven_davis_yin_splitting}. Specifically, the generalization bound~\eqref{eq:finite_sample_guarantees_algorithmic_stability} shows that~\eqref{eq:davis_yin_fixed_point_residual_bound} holds with arbitrarily high confidence $1-\delta$ and regardless of the probability distribution $\mathbb{P}$ underlying the data if the radius $\epsilon$ is such that
\begin{equation}
\label{eq:epsilon_value_for_davis_yin_fixed_point_residual_bound_algorithimic_stability}
    \epsilon \geq \frac{\hat{\rho}_{\mathcal{D}}(\hat{z}_K)}{\gamma} 
    + \hat{\ell}_{\mathcal{D}}(\omega_K) + \kappa(N, K, \delta) =: \epsilon_{\mathrm{AS}}(N, K, \delta)\,.   
\end{equation}
Up to a scaling factor $\gamma$, the first term in~\eqref{eq:epsilon_value_for_davis_yin_fixed_point_residual_bound_algorithimic_stability} is the empirical counterpart of the \gls*{DY} fixed-point residual $\rho_{\mathrm{DY}}(\hat{z}_K)$ after $K$ iterations. The second and third addends are instead statistical error terms. In particular, by inspection of~\eqref{eq:generalization_correction_term_algorithmic_stability} and since $\eta_K \leq \frac{\gamma\varsigma K}{N}$ by Proposition~\ref{prop:one_sample_replacement_sensitivity}, we observe that, for any fixed $K \in \N$, the stability parameter $\beta_K$ decays at rate $\frac{1}{N}$ with the number of samples in $\mathcal{D}$, and the generalization correction term $\kappa(N, K, \delta)$ thus vanishes at rate $\frac{1}{\sqrt{N}}$, up to logarithmic
factors in $\frac{1}{\delta}$.\footnote{More generally, if the iteration budget is allowed to depend on the sample size, the correction term $\kappa(N, K, \delta)$ still vanishes provided that $K = o(\sqrt{N})$.} Conversely, as shown by Example~\ref{ex:average_oracle_discrepancy_does_not_vanish} below, the term $\hat{\ell}_{\mathcal{D}}(\omega_K)$, which measures the average sample-wise oracle dispersion around the \gls*{SAA}~\eqref{eq:data_driven_approximation_C_saa} at $\hat{x}_K$, need not converge to zero as $N$ increases. Therefore, contrary to~\cite{fabiani2026finite}, we conclude that the algorithmic stability analysis and the bound~\eqref{eq:epsilon_value_for_davis_yin_fixed_point_residual_bound_algorithimic_stability} generally do not yield an asymptotically consistent residual certificate.

\begin{example}
\label{ex:average_oracle_discrepancy_does_not_vanish}
    Let us consider an instance of~\eqref{eq:monotone_inclusion} on $\R$, where $A = 0$, $B = 0$, and $C = \operatorname{tanh}$. We note that $A$ and $B$ are maximally monotone, while $C$ is bounded and $1$-cocoercive, so~\eqref{eq:davis_yin_splitting} reduces to the \gls*{FB} in \cite[Algorithm~1]{fabiani2026finite}. Moreover, the inclusion~\eqref{eq:monotone_inclusion} has the unique solution $x^\star = 0$. 
    
    For any $\tau > 0$, consider the oracle $O(x, \xi) = \operatorname{tanh}(x) + \tau \xi$, where $\xi$ is a Rademacher random variable taking values $+1$ and $-1$ with equal probability. Then, $\mathbb{E}_{\xi \sim \mathbb{P}}[O(x, \xi)] = \operatorname{tanh}(x) = C(x)$, so the oracle $O$ is unbiased. Moreover,~\cite[Assumption~2.2]{fabiani2026finite}, Assumption~\ref{ass:oracle_replacement-sensitivity}, and Assumption~\ref{ass:oracle_theta_cocoercive} hold with $M = \tau$, $\varsigma = 2\tau$, and $\theta = 1$, respectively. 

    Let $N \in \N$ be even and consider a balanced dataset $\mathcal{D} = \{\xi^1, \dots, \xi^N\}$ where the values $+1$ and $-1$ each appear $\frac{N}{2}$ times. Then, the \gls*{SAA}~\eqref{eq:data_driven_approximation_C_saa} gives $\widehat C_{\mathcal D}(x) = \frac{1}{N} \sum_{i=1}^N O(x,\xi^i) = \tanh(x) = C(x)$ for every $x \in \R$, and~\eqref{eq:davis_yin_splitting} and~\eqref{eq:data_driven_davis_yin_splitting} therefore coincide. 
    In particular, if $\hat{z}_0 = 0$, then~\eqref{eq:data_driven_davis_yin_splitting} yields $\hat x_t=\hat y_t=\hat z_t=0$ at all times, and thus $\rho_{\mathrm{DY}}(\hat{z}_K) = \hat{\rho}_{\mathcal{D}}(\hat{z}_K) = 0$ and $\Delta_{\mathcal{D}}(\hat{x}_K) = 0$. Nevertheless, for every $K \in \N$, the right-hand side of~\eqref{eq:epsilon_value_for_davis_yin_fixed_point_residual_bound_algorithimic_stability} can be arbitrarily large since
    \begin{equation*}
        \hat{\ell}_{\mathcal{D}}(\omega_K) = \frac{1}{N}\sum_{i = 1}^N \ell_{\Delta}(\omega_K, \xi^i) = \frac{1}{N}\sum_{i = 1}^N \|- \tau \xi^i\| = \tau\,,
    \end{equation*}
    and $\tau > 0$ is arbitrary, meaning that the radius $\epsilon$ in~\eqref{eq:epsilon_value_for_davis_yin_fixed_point_residual_bound_algorithimic_stability} cannot generally be made arbitrarily small by increasing $N$ and $K$. This is because, while the \gls*{SAA}~\eqref{eq:data_driven_approximation_C_saa} and~\eqref{eq:saa_at_data_dependent_final_iterate_without_norm} involve an average of signed oracle errors, the surrogate loss $\ell_{\Delta}$ is defined by taking norms before averaging in~\eqref{eq:saa_error_upper_bounded_by_population_risk_jensen_conservative_step}, preventing cancellations among oracle errors with opposite signs. The same phenomenon appears in the \gls*{FB} certificate of \cite{fabiani2026finite}. In fact, the expression for $\epsilon$ given in the proof of Theorem~3.4 in \cite{fabiani2026finite} contains the empirical sample-wise loss $\frac{1}{\gamma N} \sum_{i=1}^{N} \| x_{K+1} - \gamma O(x_{K+1},\xi^{i}) -y_K \|$, in addition to nonnegative stability correction terms. Hence, in the present example, this gives $\epsilon \geq %
        \frac{1}{\gamma N} \sum_{i=1}^N \| - \gamma\tau\xi^i\| = \tau$.
\end{example}

\subsection{A priori certificates via martingale inequalities}
Motivated by the limitation of the finite-sample certificate~\eqref{eq:epsilon_value_for_davis_yin_fixed_point_residual_bound_algorithimic_stability} highlighted by Example~\ref{ex:average_oracle_discrepancy_does_not_vanish}, in this subsection we derive a complementary, a priori finite-sample certificate on the \gls*{SAA} error $\|\Delta_{\mathcal{D}}(\hat{x}_K)\|$ by controlling the random variable~\eqref{eq:saa_at_data_dependent_final_iterate_without_norm} directly, rather than through the scalar surrogate $\ell_{\Delta}$.

As discussed in Section~\ref{sec:main_results}, the main statistical difficulty is that~\eqref{eq:saa_at_data_dependent_final_iterate_without_norm} is not an average of \gls*{iid} centered random vectors \gls*{wrt} the multi-sample extraction of $\mathcal{D} \sim \mathbb{P}^N$. To address this challenge, we consider the Doob martingale $\{M_0, \dots, M_N\}$ associated with the \gls*{SAA} error~\eqref{eq:saa_at_data_dependent_final_iterate_without_norm}, that is, the sequence of best predictions of $\Delta_{\mathcal{D}}(\hat{x}_K)$ after observing the first $i \in \{0, \dots, N\}$ samples only. Then, we leverage the replace-one sensitivity bound $\eta_K$ to control the initial value $\|M_0\|$ and the martingale increments $\| M_i - M_{i-1} \|$, for all $i \in \{1, \dots, N\}$, and conclude by applying the available martingale concentration inequalities established in~\cite{pinelis1994optimum}.

We begin by formalizing the definition of the Doob martingale $\{M_0, \dots, M_N\}$. To this end, we observe that, letting $\tilde{\xi} \sim \mathbb{P}$ denote a new sample drawn independently from $\mathcal{D}$ and by triangle inequality, Assumption~\ref{ass:oracle_unbiased}, and Assumption~\ref{ass:oracle_replacement-sensitivity}:
\begin{align*}
    \| \Delta_{\mathcal{D}}(\hat{x}_K) \| &= \norm{\frac{1}{N} \sum_{i = 1}^N O(\hat{x}_K, \xi^i) - C(\hat{x}_K)}\\
    &= \norm{\frac{1}{N} \sum_{i = 1}^N O(\hat{x}_K, \xi^i) - \mathbb{E}_{\tilde{\xi} \sim \mathbb{P}}[O(\hat{x}_K, \tilde{\xi})]}\\
    &\leq \frac{1}{N} \sum_{i = 1}^N \mathbb{E}_{\tilde{\xi} \sim \mathbb{P}}[ \| O(\hat{x}_K, \xi^i) - O(\hat{x}_K, \tilde{\xi}) \| ] \leq \varsigma\,.
\end{align*}
Therefore, $\Delta_{\mathcal{D}}(\hat{x}_K)$ is integrable, and we can define the Doob martingale $\{M_0, \dots, M_N\}$ by conditioning on the information revealed by the first $i \in \{0, \dots, N\}$ samples~\cite{boucheron2013concentration, doob1953stochastic}, i.e.,
\begin{equation}
\label{eq:doob_martingale_definition_Mi_best_predictions_given_samples_up_to_i}
    M_i = \mathbb{E}_{\xi^{i+1} \sim \mathbb{P}, \dots, \xi^N \sim \mathbb{P}}[\Delta_\mathcal{D}(\hat{x}_K) \mid \xi^1, \dots, \xi^i]\,, 
\end{equation}
with the convention that $M_0 = \mathbb{E}_{\mathcal{D} \sim \mathbb{P}^N} [\Delta_{\mathcal{D}}(\hat{x}_K)]$ and $M_N = \Delta_\mathcal{D}(\hat{x}_K)$. In particular, the sequence $\{M_0, \dots, M_N\}$ satisfies the martingale property by construction, since the law of total expectation yields
\begin{align*}
    &\mathbb{E}_{\xi^i \sim \mathbb{P}} [M_i \mid \xi^1,\ldots,\xi^{i-1}] \\
    &= \mathbb E_{\xi^i\sim\mathbb P} \!
    \left[
        \mathbb{E}_{\xi^{i+1} \sim \mathbb{P}, \ldots, \xi^N \sim \mathbb{P}} \!
        \left[
            \Delta_{\mathcal{D}}(\hat{x}_K)
            \!\mid\!
            \xi^1,{\ldots},\xi^i
        \right]
        \middle|
        \xi^1,{\ldots},\xi^{i-1}
    \right]                                          \\
    &= \mathbb{E}_{\xi^{i} \sim \mathbb{P}, \ldots, \xi^N \sim \mathbb{P}}
    \left[
        \Delta_{\mathcal{D}}(\hat{x}_K)
        \mid
        \xi^1,\ldots,\xi^{i-1}
    \right] = M_{i-1}\,.
\end{align*}

The following lemma provides a deterministic upper bound on the mean of the \gls*{SAA} error in~\eqref{eq:saa_at_data_dependent_final_iterate_without_norm}.
    
\begin{lemma}
    \label{le:saa_error_bias_control}
    Let $\gamma \in (0, 2\theta)$ and $\mathcal{D} \sim \mathbb{P}^N$. Then, the initial value of the Doob martingale $\{M_0, \dots, M_N\}$ in~\eqref{eq:doob_martingale_definition_Mi_best_predictions_given_samples_up_to_i} satisfies
    \begin{equation}
    \label{eq:saa_error_bias_control}
        \norm{\mathbb{E}_{\mathcal{D} \sim \mathbb{P}^N} [\Delta_{\mathcal{D}}(\hat{x}_K)]} = \| M_0 \| \leq \frac{2\eta_K}{\theta}\,.
    \end{equation}
\end{lemma}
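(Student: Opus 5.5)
The plan is the classical symmetrization (ghost-sample) argument from stability theory, applied to the vector-valued SAA error instead of a scalar loss. First I rewrite the mean using Assumption~\ref{ass:oracle_unbiased}. Let $\tilde{\xi}^1, \dots, \tilde{\xi}^N \sim \mathbb{P}$ be independent of $\mathcal{D}$, and let $\mathcal{D}^i$ be the dataset with $\xi^i$ replaced by $\tilde{\xi}^i$. Since $\hat{x}_K$ depends only on $\mathcal{D}$ and is therefore independent of $\tilde{\xi}^i$, conditioning on $\mathcal{D}$ gives $C(\hat{x}_K) = \mathbb{E}_{\tilde{\xi}^i}[O(\hat{x}_K, \tilde{\xi}^i) \mid \mathcal{D}]$. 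Hence
\begin{equation*}
M_0 = \frac{1}{N}\sum_{i=1}^N \mathbb{E}\big[O(\hat{x}_K, \xi^i) - O(\hat{x}_K, \tilde{\xi}^i)\big]\,,
\end{equation*}
where the expectation is over $\mathcal{D}$ and all the $\tilde{\xi}^i$. Integrability is not an issue, since every term is bounded by $\varsigma$ through Assumption~\ref{ass:oracle_replacement-sensitivity}.

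Next comes the exchangeability step, which is the crux. The pair $(\mathcal{D}, \tilde{\xi}^i)$ has the same joint law as $(\mathcal{D}^i, \xi^i)$, because swapping $\xi^i$ and $\tilde{\xi}^i$ preserves the product measure. Therefore
\begin{equation*}
\mathbb{E}[O(\hat{x}_K, \tilde{\xi}^i)] = \mathbb{E}[O(\hat{x}_K^i, \xi^i)]\,,
\end{equation*}
where $\hat{x}_K^i$ is the $x$-iterate generated from $\mathcal{D}^i$ with the same initial point $\hat{z}_0$. (The algorithm is deterministic given the data, so the output is a measurable function of the dataset and the swap is legitimate.) Substituting this in,
\begin{equation*}
M_0 = \frac{1}{N}\sum_{i=1}^N \mathbb{E}\big[O(\hat{x}_K, \xi^i) - O(\hat{x}_K^i, \xi^i)\big]\,.
\end{equation*}
Both terms are now evaluated at the same sample $\xi^i$, and only the data-dependent evaluation point differs.

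Finally I pull the norm inside by Jensen and the triangle inequality. Assumption~\ref{ass:oracle_theta_cocoercive} makes $O(\cdot, \xi^i)$ $\frac{1}{\theta}$-Lipschitz, and the definition of $\eta_K$ gives $\|\hat{x}_K - \hat{x}_K^i\| \leq \eta_K$. Together these yield $\|M_0\| \leq \frac{\eta_K}{\theta}$, which already implies the claimed bound $\frac{2\eta_K}{\theta}$ (the statement's constant 2 is looser than needed).

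If the single-swap identity feels delicate, a fallback is to symmetrize twice: write $C(\hat{x}_K) = \mathbb{E}[C(\hat{x}_K^i)] + \mathbb{E}[C(\hat{x}_K) - C(\hat{x}_K^i)]$, use unbiasedness at $\hat{x}_K^i$, and bound each difference by $\frac{\eta_K}{\theta}$. This route lands exactly on $\frac{2\eta_K}{\theta}$.

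The main obstacle is making the exchangeability identity rigorous: $\hat{x}_K$ must be a measurable function of the dataset, and $\tilde{\xi}^i$ must be independent of everything else, so that swapping it with $\xi^i$ leaves the joint law invariant. Everything after that is a routine Lipschitz estimate.
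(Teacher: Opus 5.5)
Your proof is correct, and your main route differs from the paper's in a way that sharpens the constant.

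The paper inserts the zero-mean term $O(\hat{x}_K^i,\xi^i) - C(\hat{x}_K^i)$, which has zero mean because $\xi^i$ is independent of $\mathcal{D}^i$ and the oracle is unbiased. This leaves both an oracle difference $O(\hat{x}_K,\xi^i) - O(\hat{x}_K^i,\xi^i)$ and an operator difference $C(\hat{x}_K) - C(\hat{x}_K^i)$ inside the expectation. The paper bounds each in norm by $\eta_K/\theta$ (Jensen, triangle inequality, $\frac{1}{\theta}$-Lipschitzness) and arrives at $2\eta_K/\theta$; this is exactly your fallback.

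Your main route instead inserts the ghost term $C(\hat{x}_K) - O(\hat{x}_K,\tilde{\xi}^i)$ and uses that $(\mathcal{D},\tilde{\xi}^i)$ and $(\mathcal{D}^i,\xi^i)$ have the same joint law. In the paper's terms, this amounts to using that $\hat{x}_K$ and $\hat{x}_K^i$ are identically distributed. Then $\mathbb{E}[C(\hat{x}_K) - C(\hat{x}_K^i)] = 0$ exactly, rather than merely having norm at most $\eta_K/\theta$.

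What your route buys is $\norm{M_0} \leq \eta_K/\theta$. This would halve the bias term $2\eta_K/\theta$ in $\varphi(N,K,\delta)$ of Theorem~\ref{th:finite_sample_guarantees_martingale_concentration}. What the paper's route buys is that it needs only independence and unbiasedness, with no distributional-symmetry step. Its identity would stay valid for any replacement $\tilde{\xi}^i$ independent of $\xi^i$, even a deterministic one. Yours needs $\tilde{\xi}^i$ to be an exact independent copy of $\xi^i$, which the i.i.d.\ assumption provides here.
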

\begin{proof}
    With slight abuse of notation, for any $i \in \{1, \dots, N\}$, let $\mathcal{D}^i$ be any dataset obtained by replacing the $i$-th sample in $\mathcal{D}$ with a new sample $\tilde{\xi}^i \sim \mathbb{P}$, drawn independently from $\mathcal{D}$ (rather than any new sample $\tilde{\xi}^i \in \Xi$). Moreover, let $\omega_K = \mathcal{A}_{\mathrm{DY}}(\mathcal{D})$, and $\omega_K^i = \mathcal{A}_{\mathrm{DY}}(\mathcal{D}^i)$.
    By Assumption~\ref{ass:oracle_unbiased}, since $\hat{x}_K^i$ is independent of $\xi^i$, we have that
    \begin{equation}
    \label{eq:saa_error_bias_control_step_1}
        \mathbb{E}_{\xi^i \sim \mathbb{P}}[O(\hat x_K^i,\xi^i) - C(\hat x_K^i) | \mathcal{D}^i] = 0\,.
    \end{equation}
    Hence, recalling the definition of $\Delta_{\mathcal{D}}(\hat{x}_K)$ in~\eqref{eq:saa_at_data_dependent_final_iterate_without_norm} and using the linearity of the expectation operator, we then have that
    \begin{align*}
        &\mathbb{E}_{\mathcal{D} \sim \mathbb{P}^N} [\Delta_{\mathcal{D}}(\hat{x}_K)] = \frac{1}{N} \sum_{i = 1}^N \mathbb{E}_{\mathcal{D} \sim \mathbb{P}^N}[ O(\hat{x}_K, \xi^i) - C(\hat{x}_K)]\\
        &= \frac{1}{N} \sum_{i=1}^N \mathbb E_{\substack{\mathcal{D} \sim \mathbb{P}^N \\ \tilde{\xi}^i \sim \mathbb{P}}} [ O(\hat x_K,\xi^i) \!-\! C(\hat x_K) \!-\! O(\hat x_K^i,\xi^i) \!+\! C(\hat x_K^i)]\,,
    \end{align*}
    where the last identity follows by~\eqref{eq:saa_error_bias_control_step_1} and the law of total expectation. Taking norms and using Jensen's inequality and the triangle inequality, we the conclude that the left-hand side of~\eqref{eq:saa_error_bias_control} is upper bounded by
    \begin{align*}
        &\frac{1}{N} \! \sum_{i=1}^N \mathbb E_{\substack{\mathcal{D} \sim \mathbb{P}^N \\ \tilde{\xi}^i \sim \mathbb{P}}} [ \|O(\hat x_K,\xi^i) \!-\! O(\hat x_K^i,\xi^i) \| \!+\! \|C(\hat x_K) \!-\! C(\hat x_K^i) \|] \\
        &\quad \leq
        \frac{2}{\theta N}
        \sum_{i=1}^N
        \mathbb E_{\substack{\mathcal{D} \sim \mathbb{P}^N \\ \tilde{\xi}^i \sim \mathbb{P}}}
        \left[
            \left\|
                \hat x_K-\hat x_K^i
            \right\|
        \right] \leq \frac{2\eta_K}{\theta}\,,
    \end{align*}
    where, similarly to~\eqref{eq:davis_yin_uniform_stability_step_2} and~\eqref{eq:davis_yin_uniform_stability_step_3}, the last inequalities follow since both the oracle $O$ and the operator $C$ are $\theta$-cocoercive and thus $\frac{1}{\theta}$-Lipschitz by~\cite[Remark~4.15]{bauschke2017convex}. 
\end{proof}

We continue our analysis by showing that the replace-one sensitivity estimate $\eta_K$ also yields a uniform bound on the
martingale increments as follows:

\begin{lemma}
\label{le:doob_martingale_increment_bounded}
    Let $\gamma \in (0, 2\theta)$ and $\mathcal{D} \sim \mathbb{P}^N$. Then, the increments of the Doob martingale $\{M_0, \dots, M_N\}$ in~\eqref{eq:doob_martingale_definition_Mi_best_predictions_given_samples_up_to_i} satisfy
    \begin{equation}
    \label{eq:doob_martingale_increment_bounded}
        \| M_i - M_{i-1}\| \leq \frac{2\eta_K}{\theta} + \frac{\varsigma}{N} =: \nu_K \,, %
    \end{equation}    
    almost surely for all $i \in \{1, \dots, N\}$.
\end{lemma}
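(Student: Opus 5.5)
The plan is to write the increment $M_i - M_{i-1}$ as a conditional expectation of a replace-one difference and then bound that difference pointwise. Let $\tilde{\xi}^i \sim \mathbb{P}$ be drawn independently of $\mathcal{D}$, and let $\mathcal{D}^i$ be the dataset obtained by replacing $\xi^i$ with $\tilde{\xi}^i$. Since the samples are i.i.d., $\mathcal{D}^i$ has the same law as $\mathcal{D}$. Conditionally on $\xi^1,\dots,\xi^{i-1}$, integrating over $\tilde{\xi}^i,\xi^{i+1},\dots,\xi^N$ yields
\begin{equation*}
M_{i-1} = \mathbb{E}\left[\Delta_{\mathcal{D}^i}(\hat{x}_K^i) \mid \xi^1,\dots,\xi^i\right],
\end{equation*}
because $\Delta_{\mathcal{D}^i}(\hat{x}_K^i)$ does not depend on $\xi^i$, so conditioning additionally on $\xi^i$ changes nothing. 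Hence
\begin{equation*}
M_i - M_{i-1} = \mathbb{E}\left[\Delta_{\mathcal{D}}(\hat{x}_K) - \Delta_{\mathcal{D}^i}(\hat{x}_K^i) \mid \xi^1,\dots,\xi^i\right],
\end{equation*}
where the expectation is over $\xi^{i+1},\dots,\xi^N,\tilde{\xi}^i$. By Jensen's inequality, $\|M_i - M_{i-1}\|$ is at most the conditional expectation of $\|\Delta_{\mathcal{D}}(\hat{x}_K) - \Delta_{\mathcal{D}^i}(\hat{x}_K^i)\|$. It therefore suffices to bound this norm deterministically by $\nu_K$ for every pair of neighboring datasets.

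For the deterministic bound, I would expand
\begin{equation*}
\Delta_{\mathcal{D}}(\hat{x}_K) - \Delta_{\mathcal{D}^i}(\hat{x}_K^i) = \bigl(\hat{\zeta}_K - \hat{\zeta}_K^i\bigr) - \bigl(C(\hat{x}_K) - C(\hat{x}_K^i)\bigr)
\end{equation*}
and apply the triangle inequality.

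\begin{itemize}
\item The first term was already bounded in~\eqref{eq:davis_yin_uniform_stability_step_3} by $\frac{\eta_K}{\theta} + \frac{\varsigma}{N}$. That bound comes from the $\frac{1}{\theta}$-Lipschitzness of $\widehat{C}_{\mathcal{D}}$ together with Assumption~\ref{ass:oracle_replacement-sensitivity} applied to the single differing sample.
\item The second term is at most $\frac{1}{\theta}\|\hat{x}_K - \hat{x}_K^i\| \leq \frac{\eta_K}{\theta}$, since $C$ is $\theta$-cocoercive and hence $\frac{1}{\theta}$-Lipschitz by~\cite[Remark~4.15]{bauschke2017convex}.
\end{itemize}

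Summing the two gives $\frac{2\eta_K}{\theta} + \frac{\varsigma}{N} = \nu_K$. Here $\eta_K$ is a supremum over all neighboring pairs and is finite by Proposition~\ref{prop:one_sample_replacement_sensitivity}, with both runs started from the same $\hat{z}_0$. The bound thus holds for every realization, and the claim holds almost surely.

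The main obstacle is the first step: justifying the representation of $M_{i-1}$ through the ghost sample $\tilde{\xi}^i$. This needs the i.i.d. structure and Fubini to swap $\xi^i$ for $\tilde{\xi}^i$ inside the conditional expectation. Integrability of $\Delta$ is already established (its norm is bounded by $\varsigma$), and I would need $\mathcal{A}_{\mathrm{DY}}$ to be measurable, which holds because it is a finite composition of continuous maps, provided $O$ is measurable. Once the increment is written as a conditional expectation of a replace-one difference, the rest reuses estimates already derived in Lemma~\ref{le:davis_yin_uniform_stability}.
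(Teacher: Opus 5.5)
Your proposal is correct and follows the paper's proof essentially step for step. The paper also introduces an independent ghost sample $\tilde{\xi}^i \sim \mathbb{P}$ to rewrite $M_{i-1}$ as the conditional expectation of $\Delta_{\mathcal{D}^i}(\hat{x}_K^i)$ given $\xi^1,\dots,\xi^i$, applies Jensen's inequality to the resulting difference, and bounds $\|\Delta_{\mathcal{D}}(\hat{x}_K)-\Delta_{\mathcal{D}^i}(\hat{x}_K^i)\|$ pointwise via \eqref{eq:davis_yin_uniform_stability_step_3} and the $\frac{1}{\theta}$-Lipschitzness of $C$; your remarks on measurability and on $\mathcal{D}^i$ having the same law as $\mathcal{D}$ only make explicit what the paper leaves implicit.
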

\begin{proof}
    Proceeding as in the proof of Lemma~\ref{le:saa_error_bias_control}, let $\mathcal{D}^i$ be any dataset obtained by replacing the $i$-th sample in $\mathcal{D}$ with a new sample $\tilde{\xi}^i \sim \mathbb{P}$, drawn independently from $\mathcal{D}$, and define $\omega_K = \mathcal{A}_{\mathrm{DY}}(\mathcal{D})$ and $\omega_K^i = \mathcal{A}_{\mathrm{DY}}(\mathcal{D}^i)$. By the definition of the Doob martingale $\{M_0, \dots, M_N\}$ in~\eqref{eq:doob_martingale_definition_Mi_best_predictions_given_samples_up_to_i}, we have that
    \begin{align*}
        M_{i-1} &=\mathbb{E}_{\xi^{i} \sim \mathbb{P}, \dots, \xi^N \sim \mathbb{P}}
        \left[
            \Delta_{\mathcal{D}}(\hat{x}_K) \mid \xi^1,\dots,\xi^{i-1}
        \right]\\
        &=\mathbb{E}_{\tilde{\xi}^{i} \sim \mathbb{P}, \dots, \xi^N \sim \mathbb{P}}
        \left[
            \Delta_{\mathcal{D}^i}(\hat{x}_K^i) \mid \xi^1,\dots,\xi^{i-1}
        \right]\\
        &=\mathbb{E}_{\tilde{\xi}^{i} \sim \mathbb{P}, \dots, \xi^N \sim \mathbb{P}}
        \left[
            \Delta_{\mathcal{D}^i}(\hat{x}_K^i) \mid \xi^1,\dots,\xi^{i}
        \right]\,,
    \end{align*}
    where the last identity follows because $\Delta_{\mathcal{D}^i}(\hat{x}_K^i)$ does not depends on $\xi^i$. Therefore, the martingale increments satisfy
    \begin{align*}
        &M_i-M_{i-1} = \mathbb{E}_{\xi^{i+1} \sim \mathbb{P}, \dots, \xi^N \sim \mathbb{P}}
        \left[
            \Delta_{\mathcal{D}}(\hat{x}_K)
            \mid \xi^1, \dots, \xi^i
        \right] \\
        &\qquad\qquad\qquad\quad  - \mathbb{E}_{\tilde{\xi}^{i} \sim \mathbb{P}, \dots, \xi^N \sim \mathbb{P}}
        \left[
            \Delta_{\mathcal{D}^i}(\hat{x}_K^i) \mid \xi^1,\dots,\xi^{i}
        \right] \\
        & =
        \mathbb{E}_{\tilde{\xi}^i \sim \mathbb{P}, \xi^{i+1} \sim \mathbb{P}, \dots, \xi^N \sim \mathbb{P}}
        \left[
            \Delta_{\mathcal D}(\hat{x}_K) -\Delta_{\mathcal{D}^i}(\hat{x}_K^i)
            \mid
            \xi^1,\ldots,\xi^i
        \right]\,.
    \end{align*}
    By applying the Jensen's inequality, we then obtain that the left-hand side of~\eqref{eq:doob_martingale_increment_bounded} is upper bounded by
    \begin{equation*}
        \mathbb{E}_{\tilde{\xi}^i \sim \mathbb{P}, \xi^{i+1} \sim \mathbb{P}, \dots, \xi^N \sim \mathbb{P}}
        \left[ \|
            \Delta_{\mathcal D}(\hat{x}_K) -\Delta_{\mathcal{D}^i}(\hat{x}_K^i) \|
            \mid
            \xi^1,\ldots,\xi^i
        \right]\,.
    \end{equation*}
    We continue by bounding the term $\| \Delta_{\mathcal D}(\hat{x}_K) -\Delta_{\mathcal{D}^i}(\hat{x}_K^i) \|$ inside the conditional expectation. By definition of the \gls*{SAA} error in~\eqref{eq:saa_at_data_dependent_final_iterate_without_norm} and using the triangle inequality, we obtain
    \begin{align*}
        &\| \Delta_{\mathcal D}(\hat{x}_K) -\Delta_{\mathcal{D}^i}(\hat{x}_K^i) \| \leq \| \widehat C_{\mathcal D}(\hat x_K) - \widehat C_{\mathcal{D}^{i}}(\hat x_K^i) \|\\
        &\qquad\qquad\qquad\qquad\qquad\qquad\qquad + \| C(\hat x_K) - C(\hat x_K^i) \|\,,
    \end{align*}
    which yields $\|M_i-M_{i-1}\| \leq \frac{\eta_K}{\theta} + \frac{\varsigma}{N} + \frac{\eta_K}{\theta}$ almost surely by~\eqref{eq:davis_yin_uniform_stability_step_3} and $\frac{1}{\theta}$-Lipschitzness of $C$ as per~\cite[Remark~4.15]{bauschke2017convex}.
\end{proof}

We are ready to present our a priori finite-sample certificate on the \gls*{SAA} error term $\|\Delta_{\mathcal{D}}(\hat{x}_K)\|$ by combining the results of Lemma~\ref{le:saa_error_bias_control} and Lemma~\ref{le:doob_martingale_increment_bounded} with the martingale concentration inequality in~\cite[Theorem~3.5]{pinelis1994optimum}.

\begin{theorem}
    \label{th:finite_sample_guarantees_martingale_concentration}
    Let $\gamma \in (0,2\theta)$ and fix any
    $\delta\in(0,1)$. Then, with probability at least $1-\delta$ over the random draw $\mathcal D\sim \mathbb P^N$, the \gls*{SAA}
    error at the $x$-component of
    $\omega_K=\mathcal A_{\mathrm{DY}}(\mathcal D)$ satisfies %
    \begin{equation}
    \label{eq:saa_error_martingale_concentration_bound}
        \| \Delta_{\mathcal D}(\hat x_K) \| \leq
        \frac{2\eta_K}{\theta}
        +
        \nu_K
        \sqrt{
            2N\log\frac{2}{\delta}
        } =: \varphi(N, K, \delta)\,.
    \end{equation}
    Consequently, the true \gls*{DY} fixed-point residual satisfies
    \begin{equation}
    \label{eq:finite_sample_guarantees_martingale_concentration}
        \rho_{\mathrm{DY}}(\hat z_K)
        \leq
        \hat\rho_{\mathcal D}(\hat z_K)
        +
        \gamma
        \varphi(N, K, \delta).
    \end{equation}
\end{theorem}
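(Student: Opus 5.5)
We decompose the \gls*{SAA} error through the Doob martingale in~\eqref{eq:doob_martingale_definition_Mi_best_predictions_given_samples_up_to_i}. Since $M_N = \Delta_{\mathcal{D}}(\hat{x}_K)$, the triangle inequality gives
\begin{equation*}
    \|\Delta_{\mathcal{D}}(\hat{x}_K)\| \leq \|M_0\| + \|M_N - M_0\| = \|M_0\| + \norm{\sum_{i=1}^N (M_i - M_{i-1})}\,.
\end{equation*}
Lemma~\ref{le:saa_error_bias_control} bounds the deterministic term by $\|M_0\| \leq \frac{2\eta_K}{\theta}$. This is exactly the first addend of $\varphi(N,K,\delta)$, so it remains to control the centered vector-valued martingale $S_N = M_N - M_0$ with high probability.

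For this step we invoke the Euclidean-space martingale concentration inequality~\cite[Theorem~3.5]{pinelis1994optimum}. The space $\R^n$ with the Euclidean norm is $(2,1)$-smooth, so the Hoeffding--Azuma-type form of that result applies. For a martingale $\{S_i\}$ with $S_0 = 0$ and increments $d_i = M_i - M_{i-1}$ satisfying $\|d_i\|_\infty \leq c_i$ almost surely, it yields
\begin{equation*}
    \mathbb{P}\bigl(\max_{i}\|S_i\| \geq r\bigr) \leq 2\exp\Bigl(-\frac{r^2}{2\sum_{i=1}^N c_i^2}\Bigr)\,.
\end{equation*}
Lemma~\ref{le:doob_martingale_increment_bounded} supplies $c_i = \nu_K$ uniformly, so $\sum_i c_i^2 = N\nu_K^2$. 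Setting the right-hand side equal to $\delta$ gives $r = \nu_K\sqrt{2N\log(2/\delta)}$. Hence, with probability at least $1-\delta$, $\|M_N - M_0\| \leq \nu_K\sqrt{2N\log\frac{2}{\delta}}$. Combining this with the bound on $\|M_0\|$ yields~\eqref{eq:saa_error_martingale_concentration_bound}.

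The second claim~\eqref{eq:finite_sample_guarantees_martingale_concentration} then follows by inserting~\eqref{eq:saa_error_martingale_concentration_bound} into the deterministic decomposition of Proposition~\ref{prop:davis_yin_true_residual_decomposition}. That decomposition holds for every $\mathcal{D}$, so it holds in particular on the event of probability at least $1-\delta$.

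The main obstacle is matching the hypotheses and constants of~\cite[Theorem~3.5]{pinelis1994optimum}: we must make sure the increment bounds are almost-sure bounds with respect to the natural filtration generated by $\xi^1,\dots,\xi^i$, and that the smoothness constant of $(\R^n,\|\cdot\|)$ gives exactly the factor $2$ in the exponent, with no dimension dependence. If the available statement only gives the bound $2\exp(-r^2/(2D^2\sum_i c_i^2))$ with $D=1$ for Hilbert spaces, the constants match as written. Otherwise, a fallback is a Hilbert-space Azuma argument (e.g., via~\cite{boucheron2013concentration}), at the cost of a slightly different constant.
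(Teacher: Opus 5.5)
Your proposal is correct and follows the paper's proof step for step. It uses the same split $\|M_N\| \leq \|M_0\| + \|M_N - M_0\|$, Lemma~\ref{le:saa_error_bias_control} for the bias term, Lemma~\ref{le:doob_martingale_increment_bounded} for the increments, and \cite[Theorem~3.5]{pinelis1994optimum} with $D=1$ and $b_*^2 = N\nu_K^2$, and it finishes with Proposition~\ref{prop:davis_yin_true_residual_decomposition}. The only addition in the paper is that it extends the finite martingale to an infinite one, setting $f_i = M_i - M_0$ for $i \leq N$ and $f_i = f_N$ for $i > N$, because Pinelis's theorem is stated for infinite sequences; the constants match exactly, so your fallback is not needed.
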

\begin{proof}
    We begin by observing that, by the triangle inequality, the \gls*{SAA} error term $\|\Delta_{\mathcal{D}}(\hat{x}_K)\|$ can be bounded by
    \begin{equation}
    \label{eq:saa_error_decomposition_bias_martingale_fluctuation}
        \|\Delta_{\mathcal{D}}(\hat{x}_K)\| = \| M_N \| \leq  \| M_0 \| + \|M_N - M_0\|\,.
    \end{equation}
    The first term in the decomposition above is bounded by~\eqref{eq:saa_error_bias_control} in Lemma~\ref{le:saa_error_bias_control}. To control the second term in~\eqref{eq:saa_error_decomposition_bias_martingale_fluctuation}, we verify that the martingale concentration inequality of \cite[Theorem~3.5]{pinelis1994optimum} applies. To this end, we construct the sequence $\{f_0, f_1, \dots\}$, where $f_i = M_i - M_0$ for $i \in \{0, \dots, N\}$ and $f_i = f_N$ for all $i \in \N$ such that $i > N$. Then, $f_0 = 0$ by construction and $\{f_0, f_1, \dots\}$ is a martingale because so is $\{M_0, \dots, M_N\}$. Hence, since $f_i \in \R^n$ and every separable Hilbert space is also $(2,1)$-smooth~\cite{pinelis1994optimum}, it only remains to verify that $\sum_{i = 1}^{\infty} \| f_i - f_{i-1} \|_{\infty}^2$ is finite, where $\| f_i - f_{i-1} \|_{\infty}$ denotes the essential supremum of $\| f_i - f_{i-1} \|$ over all outcomes of $f_i - f_{i-1}$. By Lemma~\ref{le:doob_martingale_increment_bounded}, this condition holds since
    \begin{equation*}
        \sum_{i = 1}^{\infty} \| f_i - f_{i-1} \|_{\infty}^2 = \sum_{i = 1}^N \| M_i - M_{i-1} \|_{\infty}^2 \leq N \nu_K^2\,.
    \end{equation*} 
    Therefore, by applying \cite[Theorem~3.5]{pinelis1994optimum} with $D = 1$ and $b_*^2=N \nu_K^2$, we have that, for every $r\geq0$,
    \begin{equation*}
        \mathbb P\left(\sup_{i \geq 0} \|f_i\| \geq r\right)
        \leq 2 \exp \left(-\frac{r^2}{2N \nu_K^2} \right)\,,
    \end{equation*}
    from which we obtain that, with probability at least $1-\delta$,
    \begin{equation}
    \label{eq:saa_error_martingale_concentration_bound_step_1}
        \| M_N - M_0 \| \leq \sup_{i \geq 0} \|f_i\| \leq \nu_K
        \sqrt{2N\log\frac{2}{\delta}}\,.
    \end{equation}
    Finally,~\eqref{eq:saa_error_martingale_concentration_bound} follows by combining~\eqref{eq:saa_error_martingale_concentration_bound_step_1} with~\eqref{eq:saa_error_bias_control} and~\eqref{eq:saa_error_decomposition_bias_martingale_fluctuation}
\end{proof}

Differently from the upper bound on $\| \Delta_{\mathcal D}(\hat x_K) \|$ given by Theorem~\ref{th:finite_sample_guarantees_algorithmic_stability} in~\eqref{eq:finite_sample_guarantees_algorithmic_stability}, which depends on a specific dataset realization $\mathcal{D} \sim \mathbb{P}^N$ through the quantity $\hat{\ell}_{\mathcal{D}}(\omega_K)$, the bound~\eqref{eq:saa_error_martingale_concentration_bound} only involves quantities, such as problem parameters, that are known a priori, before the execution of the data-driven \gls*{FPI}~\eqref{eq:data_driven_davis_yin_splitting}. Most importantly, as $\eta_K \leq \frac{\gamma\varsigma K}{N}$ by Proposition~\ref{prop:one_sample_replacement_sensitivity} and recalling the definition of $\nu_K$ in~\eqref{eq:doob_martingale_increment_bounded}, Theorem~\ref{th:finite_sample_guarantees_martingale_concentration} yields the bound
\begin{equation}
\label{eq:saa_error_martingale_concentration_bound_explicit}
    \| \Delta_{\mathcal D}(\hat x_K) \| \leq \frac{2\gamma \varsigma K}{N\theta} + \varsigma \left(1 + \frac{2\gamma  K}{\theta}\right)\sqrt{\frac{2\log(2/\delta)}{N}}\,,
\end{equation}
which, unlike~\eqref{eq:finite_sample_guarantees_algorithmic_stability}, vanishes whenever $K = o(\sqrt N)$. Furthermore, similar to~\eqref{eq:epsilon_value_for_davis_yin_fixed_point_residual_bound_algorithimic_stability}, the generalization bound~\eqref{eq:finite_sample_guarantees_martingale_concentration} shows that~\eqref{eq:davis_yin_fixed_point_residual_bound} holds with arbitrarily high confidence $1-\delta$ and irrespective of $\mathbb{P}$ if the radius $\epsilon$ is such that
\begin{equation}
\label{eq:epsilon_value_for_davis_yin_fixed_point_residual_bound_martingale_inequalities}
    \epsilon \geq \frac{\hat{\rho}_{\mathcal{D}}(\hat{z}_K)}{\gamma} + \varphi(N, K, \delta) =: \epsilon_{\mathrm{MI}}(N, K, \delta)\,.
\end{equation}
In particular, we remark that~\eqref{eq:epsilon_value_for_davis_yin_fixed_point_residual_bound_algorithimic_stability} and~\eqref{eq:epsilon_value_for_davis_yin_fixed_point_residual_bound_martingale_inequalities} imply that the fixed-point residual bound~\eqref{eq:davis_yin_fixed_point_residual_bound} holds with confidence $1-\delta$ whenever $\epsilon \geq \operatorname{min}(\epsilon_{\mathrm{AS}}(N, K, \frac{\delta}{2}), \epsilon_{\mathrm{MI}}(N, K, \frac{\delta}{2}))$ by the union bound.

\subsection{Improved bounds under linear convergence}
The finite-sample certificates of Theorem~\ref{th:finite_sample_guarantees_algorithmic_stability} and Theorem~\ref{th:finite_sample_guarantees_martingale_concentration} crucially depend on the sensitivity $\eta_K$ of the $x$-component of the learned hypothesis $\omega_K = \mathcal{A}_{\mathrm{DY}}(\mathcal{D})$ to changes in the input dataset $\mathcal{D}$. Under maximal monotonicity of the operators $A$ and $B$ in~\eqref{eq:monotone_inclusion} and Assumption~\ref{ass:oracle_theta_cocoercive}, Proposition~\ref{prop:one_sample_replacement_sensitivity} yields the upper bound $\eta_K \leq \frac{\gamma \varsigma K}{N}$, which increases linearly with the iteration budget $K$. In this subsection, we show that this dependence on $K$ can be improved under standard assumptions guaranteeing contractivity of the \gls*{DY} fixed-point operator $T_{\mathrm{DY}}$ defined by~\eqref{eq:davis_yin_splitting} towards the fixed-point set \cite{davis2017three}.

To obtain a sharper bound on the replace-one sensitivity $\eta_K$, we require that at least one of the operators $A$, $B$, or $C$ in~\eqref{eq:monotone_inclusion} is strongly monotone, and at least one of $A$ or $B$ is Lipschitz continuous, as formalized in the following assumption~\cite{davis2017three}.
\begin{assumption}
\label{ass:davis_yin_linear_convergence}
     Let $\mu_A, \mu_B, \mu_C \geq 0$ be the strong monotonicity constants of $A$, $B$, and $C$, respectively. Moreover, let $L_A,L_B\in(0,+\infty]$ denote the Lipschitz constants of $A$ and $B$, with the convention that $1/L_A=0$ if $A$ is not Lipschitz and $1/L_B=0$ if $B$ is not Lipschitz. Then, it holds that
     \begin{equation*}
         (\mu_A + \mu_B + \mu_C) (1 /L_A + 1 / L_B) > 0\,.\footnote{As discussed in Section~\ref{sec:problem_formulation} and similarly to~\cite{hardt2016train, farnia2021train, fabiani2026finite}, in what follows we tacitly assume that the oracle $O(\cdot, \xi)$ satisfies the same properties postulated for the true operator $C$, uniformly for all $\xi \in \Xi$. Similarly, we assume that the solution set of~\eqref{eq:monotone_inclusion}, with $\widehat{C}_{\mathcal{D}}$ in place of $C$, is non-empty for all $\mathcal{D} \sim \mathbb{P}^N$.}
     \end{equation*}
\end{assumption}
Since averaging preserves strong monotonicity, under Assumption~\ref{ass:davis_yin_linear_convergence}, \cite[Theorem~D.6]{davis2017three} ensures that, for any step-size $\gamma \in (0, 2\beta)$, any initial point $\hat{z}_0 \in \R^n$, and any training dataset realization $\mathcal{D} \in \Xi^N$, there exists a constant $\alpha \in (0,1)$, given explicitly in \cite[Theorem~D.6]{davis2017three} and dependent of the specific scenario in Assumption~\ref{ass:davis_yin_linear_convergence} but not on $\mathcal{D}$, such that the sequence $\{\hat{z}_t\}_{t \in \N}$ defined by~\eqref{eq:data_driven_davis_yin_splitting} satisfies
\begin{equation}
\label{eq:davis_yin_contractivity_fixed_point_set}
    \| \hat{z}_{K} - z^\star_{\mathcal{D}}\| \leq \alpha^K \|\hat{z}_0 - z^\star_{\mathcal{D}} \|\,,
\end{equation}
where $z^\star_{\mathcal{D}}$ denotes a fixed-point of $\widehat{T}_{\mathcal{D}}$. Our next proposition leverages the exponential convergence bound of each trajectory of~\eqref{eq:data_driven_davis_yin_splitting} to its corresponding fixed-point set in~\eqref{eq:davis_yin_contractivity_fixed_point_set} to refine the estimate of the replace-one sensitivity $\eta_K$ given by~\eqref{eq:one_sample_replacement_sensitivity}.

\begin{proposition}
\label{prop:improved_replace_one_sensitivity_linear_convergence}
    Fix any $\gamma \in (0, 2\theta)$. Then, for any pair of neighboring datasets $\mathcal{D} \in \Xi^N$ and $\mathcal{D}^i$, with $i \in \{1, \dots, N\}$, the respective hypotheses $\omega_K = \mathcal{A}_{\mathrm{DY}}(\mathcal{D})$ and $\omega_K^{i} = \mathcal{A}_{\mathrm{DY}}(\mathcal{D}^{i})$ generated through the \gls*{FPI}~\eqref{eq:data_driven_davis_yin_splitting}, initialized with the same starting point $\hat{z}_0 = \hat{z}_0^i \in \R^n$, satisfy
    \begin{equation}
    \label{eq:improved_one_sample_replacement_sensitivity}
        \|\hat{x}_K - \hat{x}_K^{i}\| \leq 2 d_0 \alpha^K + \frac{\varsigma}{\bar{\mu} N}\,,
    \end{equation}
    where  $d_0 = \sup_{\mathcal{D} \in \Xi^N} \operatorname{dist}(\hat{z}_0, \operatorname{fix}(\widehat{T}_{\mathcal{D}}))$ is the maximum initial distance to the set of fixed-points of $\widehat{T}_{\mathcal{D}}$ and $\bar{\mu} = \mu_A + \mu_B + \mu_C$.
\end{proposition}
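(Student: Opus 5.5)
We split the sensitivity through the two fixed points. Let $z^\star_{\mathcal{D}} = \operatorname{proj}_{\operatorname{fix}(\widehat{T}_{\mathcal{D}})}(\hat{z}_0)$ and define $z^\star_{\mathcal{D}^i}$ analogously. Set $x^\star_{\mathcal{D}} = J_{\gamma B}(z^\star_{\mathcal{D}})$ and $x^\star_{\mathcal{D}^i} = J_{\gamma B}(z^\star_{\mathcal{D}^i})$. Since $J_{\gamma B}$ is nonexpansive, the triangle inequality gives
\begin{equation*}
\|\hat{x}_K - \hat{x}_K^i\| \leq \|\hat{z}_K - z^\star_{\mathcal{D}}\| + \|x^\star_{\mathcal{D}} - x^\star_{\mathcal{D}^i}\| + \|\hat{z}_K^i - z^\star_{\mathcal{D}^i}\|\,.
\end{equation*}
We apply the linear convergence estimate~\eqref{eq:davis_yin_contractivity_fixed_point_set} to each trajectory, started at the common point $\hat{z}_0$. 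The first and third terms are then each at most $\alpha^K \operatorname{dist}(\hat{z}_0, \operatorname{fix}(\widehat{T}_{\cdot})) \leq \alpha^K d_0$, which together give the $2 d_0 \alpha^K$ term. Bounding the $x$-components rather than the $z$-components is deliberate. Under strong monotonicity, the zero of the inclusion is unique, so the $x$-part of every fixed point is the same. The $z$-fixed points, by contrast, may form a set.

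The remaining task is to show $\|x^\star_{\mathcal{D}} - x^\star_{\mathcal{D}^i}\| \leq \varsigma/(\bar\mu N)$. By \cite[Lemma~2.2]{davis2017three}, $x^\star = J_{\gamma B}(z^\star)$ for $z^\star \in \operatorname{fix}(\widehat{T}_{\mathcal{D}})$ is exactly a zero of $A + B + \widehat{C}_{\mathcal{D}}$; write $x^\star := x^\star_{\mathcal{D}}$. Likewise $x^{\star,i} := x^\star_{\mathcal{D}^i}$ is a zero of $A + B + \widehat{C}_{\mathcal{D}^i}$. Hence there are $a \in A(x^\star)$, $b \in B(x^\star)$, $a' \in A(x^{\star,i})$, $b' \in B(x^{\star,i})$ with
\begin{equation*}
a + b + \widehat{C}_{\mathcal{D}}(x^\star) = 0 = a' + b' + \widehat{C}_{\mathcal{D}^i}(x^{\star,i})\,.
\end{equation*}
Subtracting and taking the inner product with $x^\star - x^{\star,i}$ gives
\begin{equation*}
\langle a - a' + b - b' + \widehat{C}_{\mathcal{D}}(x^\star) - \widehat{C}_{\mathcal{D}}(x^{\star,i}),\, x^\star - x^{\star,i}\rangle = \langle \widehat{C}_{\mathcal{D}^i}(x^{\star,i}) - \widehat{C}_{\mathcal{D}}(x^{\star,i}),\, x^\star - x^{\star,i}\rangle\,.
\end{equation*}
We now bound the two sides.

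\begin{itemize}[leftmargin=*]
\item \emph{Left side.} Strong monotonicity of $A$ and $B$, together with that of $\widehat{C}_{\mathcal{D}}$, bounds it from below by $\bar\mu\|x^\star - x^{\star,i}\|^2$. The constant for $\widehat{C}_{\mathcal{D}}$ is $\mu_C$, because averaging preserves strong monotonicity and the oracles inherit it by the footnote convention.
\item \emph{Right side.} Cauchy--Schwarz and the computation in~\eqref{eq:one_sample_replacement_sensitivity_step_2} bound it from above by $\frac{\varsigma}{N}\|x^\star - x^{\star,i}\|$.
\end{itemize}
Dividing through by $\|x^\star - x^{\star,i}\|$ yields $\|x^\star - x^{\star,i}\| \leq \varsigma/(\bar\mu N)$. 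The bound is trivial when $x^\star = x^{\star,i}$. Combining this with the $2 d_0 \alpha^K$ term gives~\eqref{eq:improved_one_sample_replacement_sensitivity}.

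The main obstacle is the fixed-point bookkeeping. We need the characterization $J_{\gamma B}(\operatorname{fix}\widehat{T}_{\mathcal{D}}) = \operatorname{zer}(A + B + \widehat{C}_{\mathcal{D}})$, which holds by the nonemptiness assumption in the footnote. We also need to check that~\eqref{eq:davis_yin_contractivity_fixed_point_set} can be read with $z^\star_{\mathcal{D}}$ equal to the projection of $\hat{z}_0$, so that $d_0$ is the right constant. If the theorem in \cite{davis2017three} is stated only for the unique fixed point in the contractive case, this is immediate. Otherwise, we would use $\operatorname{dist}(\hat{z}_K, \operatorname{fix})$ together with $\|J_{\gamma B}\hat{z}_K - x^\star\| \leq \operatorname{dist}(\hat{z}_K, \operatorname{fix})$; this last inequality holds because every fixed point maps to the same unique $x^\star$.
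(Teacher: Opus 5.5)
Your proposal is correct and follows the paper's proof essentially step for step. It uses the same three-term triangle inequality through $x^\star_{\mathcal{D}}$ and $x^\star_{\mathcal{D}^i}$, nonexpansiveness of $J_{\gamma B}$ plus~\eqref{eq:davis_yin_contractivity_fixed_point_set} for the two outer terms, and strong monotonicity of $A+B+\widehat{C}_{\mathcal{D}}$ with Cauchy--Schwarz and the $\varsigma/N$ bound for the middle term (you merely write out the selections $a,b,a',b'$ explicitly). The fixed-point bookkeeping you flag is moot: under Assumption~\ref{ass:davis_yin_linear_convergence} one of $A$, $B$ is Lipschitz, hence single-valued, and $x^\star$ is unique, so the only fixed point of $\widehat{T}_{\mathcal{D}}$ is $x^\star+\gamma u_B$ with $u_B$ the unique element of $B(x^\star)$ satisfying $-u_B-\widehat{C}_{\mathcal{D}}(x^\star)\in A(x^\star)$, and thus $\|\hat{z}_0-z^\star_{\mathcal{D}}\|\leq d_0$ holds directly.
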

\begin{proof}
    For any pair of neighboring datasets $\mathcal{D} \in \Xi^N$ and $\mathcal{D}^i$, let $x^\star_\mathcal{D}$ and $x^\star_{\mathcal{D}^i}$ denote the solution to the monotone inclusion problem~\eqref{eq:monotone_inclusion} with $\widehat C_{\mathcal D}$ and $\widehat C_{\mathcal D^i}$ in place of $C$, respectively. In particular, note that $x^\star_\mathcal{D}$ and $x^\star_{\mathcal{D}^i}$ exist and are unique since the solution set of~\eqref{eq:monotone_inclusion} is non-empty by assumption and since, for all $\mathcal{D} \in \Xi^N$, $A+B+\widehat C_{\mathcal D} =: \widehat F_\mathcal{D}$ is $\bar{\mu}$-strongly monotone. We first observe that, by the triangle inequality, the left-hand side of~\eqref{eq:improved_one_sample_replacement_sensitivity} is upper bounded by
    \begin{equation}
    \label{eq:improved_one_sample_replacement_sensitivity_step_1}
        \|\hat{x}_K - \hat{x}_K^{i}\| \leq
        \|\hat x_K - x_{\mathcal D}^{\star}\| + \|\hat x_K^{i} - x_{\mathcal D^i}^{\star}\| + \|x_{\mathcal D}^{\star}-x_{\mathcal D^i}^{\star}\|\,.
    \end{equation}
    By nonexpansiveness of $J_{\gamma B}$ and using~\eqref{eq:davis_yin_contractivity_fixed_point_set}, both the first and second terms in the decomposition above can be bounded by
    \begin{equation}
    \label{eq:improved_one_sample_replacement_sensitivity_step_2}
        \|\hat x_K - x_{\mathcal D}^{\star}\| \leq \|\hat z_K - z_{\mathcal D}^{\star}\| \leq \alpha^K \|\hat{z}_0 - z^\star_{\mathcal{D}} \| \leq d_0 \alpha^K\,.
    \end{equation}
    It therefore remains to bound the third term in~\eqref{eq:improved_one_sample_replacement_sensitivity_step_1}. To this end, we observe that $0 \in \widehat F_{\mathcal{D}^i}(x^\star_{\mathcal{D}^i})$ implies 
    \begin{equation*}
        \widehat C_{\mathcal D}(x_{\mathcal D^i}^{\star}) - \widehat C_{\mathcal D^i}(x_{\mathcal D^i}^{\star}) \in \widehat{F}_{\mathcal D}(x_{\mathcal D^i}^{\star})\,.
    \end{equation*}
    Moreover, by $\bar{\mu}$-strong monotonicity of $F_{\mathcal D}$, since $0 \in \widehat F_{\mathcal{D}}(x^\star_{\mathcal{D}})$ by definition, and using Cauchy-Schwarz inequality, we have
    \begin{align}
        \nonumber
        &\bar\mu
        \|x_{\mathcal D}^{\star}-x_{\mathcal D^i}^{\star}\|^2
        \leq
        \langle \widehat C_{\mathcal D^i}(x_{\mathcal D^i}^{\star}) - \widehat C_{\mathcal D}(x_{\mathcal D^i}^{\star}), x_{\mathcal D}^{\star}-x_{\mathcal D^i}^{\star} \rangle \\
        \label{eq:improved_one_sample_replacement_sensitivity_step_3}
        &\qquad\leq
        \| \widehat C_{\mathcal D}(x_{\mathcal D^i}^{\star}) - \widehat C_{\mathcal D^i}(x_{\mathcal D^i}^{\star}) \|
        \|x_{\mathcal D}^{\star}-x_{\mathcal D^i}^{\star}\|\,.
    \end{align}
    Since $\mathcal D$ and $\mathcal D^i$ differ in one sample only, using Assumption~\ref{ass:oracle_replacement-sensitivity} and following~\eqref{eq:one_sample_replacement_sensitivity_step_2}, we obtain $\| \widehat C_{\mathcal D}(x_{\mathcal D^i}^{\star}) - \widehat C_{\mathcal D^i}(x_{\mathcal D^i}^{\star}) \| \leq \frac{\varsigma}{N}$, which, together with~\eqref{eq:improved_one_sample_replacement_sensitivity_step_3}, yields $\|x_{\mathcal D}^{\star}-x_{\mathcal D^i}^{\star}\| \leq \frac{\varsigma}{\bar\mu N}$, which concludes the proof by inspection of~\eqref{eq:improved_one_sample_replacement_sensitivity_step_1} and~\eqref{eq:improved_one_sample_replacement_sensitivity_step_2}.
\end{proof}

Proposition~\ref{prop:improved_replace_one_sensitivity_linear_convergence}
shows that, whenever Assumption~\ref{ass:davis_yin_linear_convergence} holds and the \gls*{DY} algorithm converges linearly, the dependence of the replace-one sensitivity $\eta_K$ on the iteration count improves from the scaling $\mathcal{O}\left(\frac{K}{N}\right)$ in Proposition~\ref{prop:one_sample_replacement_sensitivity} to $\mathcal{O}\left(\alpha^K + \frac{1}{N}\right)$, with $\alpha \in (0,1)$, showing that the sensitivity bound $\eta_K$ no longer deteriorates as the iteration budget increases. In particular, we note that substituting~\eqref{eq:improved_one_sample_replacement_sensitivity} in~\eqref{eq:generalization_correction_term_algorithmic_stability} and~\eqref{eq:saa_error_martingale_concentration_bound} yields corresponding improved finite-sample guarantees for the fixed-point residual~\eqref{eq:davis_yin_fixed_point_residual_bound}. Finally, we remark that, unlike the case of \gls*{FB} splitting methods considered in~\cite{fabiani2026finite}, Assumption~\ref{ass:davis_yin_linear_convergence} does not ensure contractivity of the \gls*{DY} fixed-point operator $T_{\mathrm{DY}}$ in~\eqref{eq:davis_yin_splitting}; for this reason, our bound~\eqref{eq:improved_one_sample_replacement_sensitivity} retains an exponentially decaying dependence on the iteration count $K$.

\section{Numerical experiments}
In this section, we present numerical simulations to validate the finite-sample guarantees of Theorem~\ref{th:finite_sample_guarantees_algorithmic_stability} and Theorem~\ref{th:finite_sample_guarantees_martingale_concentration}.\footnote{The source code that reproduces our numerical examples is available at \href{https://github.com/andrea-martin/data-driven-davis-yin.git}{github.com/andrea-martin/data-driven-davis-yin.git}.} For our experiments, we consider a portfolio optimization problem where the objective is to distribute the available resources in $n = 10$ different assets to minimize the investment risk while guaranteeing an expected return on the investments of at least $r \geq 0$. 
Formally, we model the distribution of our assets with a vector $x \in \mathcal{S}^n$, where the $i$-th component of $x$ represents the percentage of capital invested in $i$-th asset and 
\begin{equation*}
    \mathcal{S}^n = \{x \in \R^n: \boldsymbol{1}^\top x = 1, ~x \geq 0 \}\,,
\end{equation*}
denotes the standard simplex. We model the asset returns over the considered investment period as $R = m + \xi$, where $m \in \R^n$ collects the expected return of each asset and $\xi \sim \mathbb{P}$ is a zero-mean stochastic deviation, and measure the portfolio return corresponding to an allocation strategy $x \in \mathcal{S}^n$ by $(m + \xi)^\top x$. Last, following~\cite{davis2017three}, we augment the standard portfolio risk, measured by the variance of asset returns $x^\top \Sigma x$, with a term encouraging diversity of investments among assets. With these modeling choices, the portfolio optimization problem reads as
\begin{alignat}{3}
    \label{eq:portfolio_optimization_problem}
    &~\min_{x \in \mathcal{S}^n} ~ \mathbb{E}_{\xi \sim \mathbb{P}}[(\xi^\top x)^2] + \psi \|x\|^2 = &&~\min_{x \in \mathcal{S}^n} ~ x^\top (\Sigma + \psi I) x\\
    \nonumber
    &\st~ m^\top x \geq r &&\st ~ m^\top x \geq r\,,
    \nonumber
\end{alignat}
where $\psi > 0$ is a regularization parameter and $\Sigma \succeq 0$ is the covariance matrix of $\xi \sim \mathbb{P}$. Let $\mathcal{C} = \{x \in \R^n : \mathbb{E}_{\xi \sim \mathbb{P}}[R^\top x] = m^\top x \geq r\}$ denote the set of strategies satisfying the minimum expected return constraint. Then, the portfolio optimization problem~\eqref{eq:portfolio_optimization_problem} can be equivalently rewritten as: 
\begin{equation*}
    \min_{x \in \R^n} ~x^\top (\Sigma + \psi I) x + i_{\mathcal{C}}(x) + i_{\mathcal{S}^n}(x)\,,
\end{equation*}
where $i_{\mathcal{X}}$ is the indicator function of the set $\mathcal{X}$. Since $\mathcal{C}$ and $\mathcal{S}^n$ are both closed convex polyhedral sets, the first-order optimality condition of~\eqref{eq:portfolio_optimization_problem} yields
an instance of~\eqref{eq:monotone_inclusion} with $A = \operatorname{N}_{\mathcal{C}} = \partial i_\mathcal{C}$, $B = \operatorname{N}_{\mathcal{S}^n} = \partial i_{\mathcal{S}^n}$, and $C(x) = 2(\Sigma + \psi I)x$, because the subdifferential of any proper lower semicontinuous convex function is maximally monotone by~\cite[Theorem~20.25]{bauschke2017convex} and the map $C(x)$ is $\theta$-cocoercive, with $\theta = \frac{1}{2(\lambda_{\max}(\Sigma) + \psi)}$, by the Baillon–Haddad theorem~\cite[Corollary~18.17]{bauschke2017convex}.

As the resolvents of a normal cone operator correspond to Euclidean projections, applying the \gls*{DY} splitting~\eqref{eq:davis_yin_splitting} to the portfolio optimization problem~\eqref{eq:portfolio_optimization_problem} yields the recursion
\begin{subequations}
    \label{eq:davis_yin_splitting_portfolio_optimization}
    \begin{align}
        \label{eq:davis_yin_splitting_x_portfolio_optimization}
        x_t^{[i]} &= (z_t^{[i]} - \tau)_+\,, ~\forall i \in \{1, \dots, n\}\,,\\
        \label{eq:davis_yin_splitting_v_portfolio_optimization}
        v_t &= 2 x_t - z_t - 2\gamma(\Sigma + \psi I)x_t\,,\\
        \label{eq:davis_yin_splitting_y_portfolio_optimization}
        y_t &= v_t + \frac{(r - m^\top v_t)_+}{\| m \|^2}m\,,\\
        \label{eq:davis_yin_splitting_z_portfolio_optimization}
        z_{t+1} &= z_t + y_t - x_t\,,
    \end{align}
\end{subequations}
where $x_t^{[i]}$ denotes the $i$-th component of the vector $x_t$, $(a)_+ = \max\{a, 0\}$ represents the positive part of $a \in \R$, and $\tau$ in~\eqref{eq:davis_yin_splitting_x_portfolio_optimization} is given by $\tau = \frac{1}{\rho}\left(\sum_{i = 1}^\rho u_t^{[i]} - 1\right)$, where
\begin{equation*}
    \rho = \max\left\{j \in \{1, \dots, n\} : u_t^{[j]} - \frac{1}{j}\left(\sum_{i= 1}^j u_t^{[i]} - 1\right) > 0\right\}\,,
\end{equation*}
and $u_t$ is a permutation of $z_t$ such that $u_t^{[1]} \geq \dots \geq u_t^{[n]}$; see, for instance,~\cite{duchi2008efficient} for a complete derivation. In particular, we observe that, while projecting onto $\mathcal{C} \cap \mathcal{S}^n$ generally requires solving a quadratic program, all algorithmic steps in~\eqref{eq:davis_yin_splitting_portfolio_optimization} can be computed in nearly closed-form, making the \gls*{DY} splitting~\eqref{eq:davis_yin_splitting} computationally more efficient than \gls*{FB} methods, especially for large-scale instances of~\eqref{eq:portfolio_optimization_problem}. 

As the probability distribution $\mathbb{P}$ underlying the asset returns is generally unknown, we then replace the forward evaluation of $C$ in~\eqref{eq:davis_yin_splitting_v_portfolio_optimization} with the associated \gls*{SAA}~\eqref{eq:data_driven_approximation_C_saa}. To this end, we rely on the stochastic oracle $O$ defined by
\begin{equation}
\label{eq:oracle_portfolio_optimization}
    O(x, \xi) = 2( \xi \xi^T + \psi I ) x\,,
\end{equation}
and a finite set of training samples $\{\xi^1, \dots, \xi^N\}$ generated as follows. 
Based on standard factor covariance models~\cite{ross2013arbitrage}, we first randomly select a dense matrix $\Sigma \succeq 0$ to capture correlations between assets, and normalize it so that $\operatorname{tr}(\Sigma) = 1$. Then, we let $\xi = U \Lambda^{\frac{1}{2}}s$, where $\Sigma = U \Lambda U^\top$ is the singular value decomposition of $\Sigma$ and each component of $s$ is an independent Rademacher random variable. In particular, this choice ensures that the oracle $O$ in~\eqref{eq:oracle_portfolio_optimization} is unbiased, is $\hat{\theta}$-cocoercive, with $\hat{\theta} = \frac{1}{2(1+\psi)}$, since $\| \xi \| = 1$ by construction, and that it satisfies Assumption~\ref{ass:oracle_replacement-sensitivity} with $\varsigma = 2$, since
\begin{align*}
    \|O(x, \xi) - O(x, \xi^\prime)\| &= 2 \| \xi \xi^\top x - \xi^\prime {\xi^\prime}^\top x\| \\
    &\leq 2 \| \xi \xi^\top - \xi^\prime {\xi^\prime}^\top\|\| x\| \\
    &= 2\sqrt{1 - (\xi^\top \xi^\prime)^2} \| x \| \leq 2\|x\|_1 \leq 2\,,
\end{align*}
for all $x \in \mathcal{S}^n$.\footnote{For this example, it suffices to verify Assumption~\ref{ass:oracle_replacement-sensitivity} for all $x \in \mathcal{S}^n$ since the resolvent $J_{\gamma B}$ is the projection operator onto the standard simplex.} Based on~\eqref{eq:oracle_portfolio_optimization}, we define the data-driven counterpart of the \gls*{FPI}~\eqref{eq:davis_yin_splitting_portfolio_optimization} by replacing the update~\eqref{eq:davis_yin_splitting_v_portfolio_optimization} by
\begin{equation}
\label{eq:data_driven_davis_yin_splitting_v_portfolio_optimization}
    \hat{v}_t = 2\hat{x}_t - \hat{z}_t - 2\gamma(\widehat{\Sigma}_{\mathcal{D}} + \psi I)\hat{x}_t\,, \quad \widehat{\Sigma}_{\mathcal{D}} = \frac{1}{N} \sum_{i = 1}^N \xi^{i} {\xi^i}^\top\,.
\end{equation}

To validate the finite-sample guarantees of Theorem~\ref{th:finite_sample_guarantees_algorithmic_stability} and Theorem~\ref{th:finite_sample_guarantees_martingale_concentration}, we then unroll the \gls*{FPI}~\eqref{eq:davis_yin_splitting_portfolio_optimization}, with~\eqref{eq:data_driven_davis_yin_splitting_v_portfolio_optimization} in place of~\eqref{eq:davis_yin_splitting_v_portfolio_optimization}, for $K= 50$ iterations, using different sample sizes $N \in [10^1, 10^{18}]$. Figure~\ref{fig:portfolio_optimization_K50} collects the results obtained with starting point $z_0 = \frac{1}{10} \mathbf{1}$, regularization parameter $\psi = 10^{-3}$, confidence parameter $\delta = 10^{-6}$, step size $\gamma = 0.899 \in (0, 2\hat{\theta})$, an expected return $m^{[i]}\sim \mathcal{U}_{[0.04, 0.14]}$ for all assets, and target return $r = 0.11$, chosen such that $r < \max_{i \in \{1, \dots, n\}} ~m^{[i]}$ to ensure feasibility after drawing $m$. 
\begin{figure}[htb]
\centering
\includegraphics[width=\columnwidth]{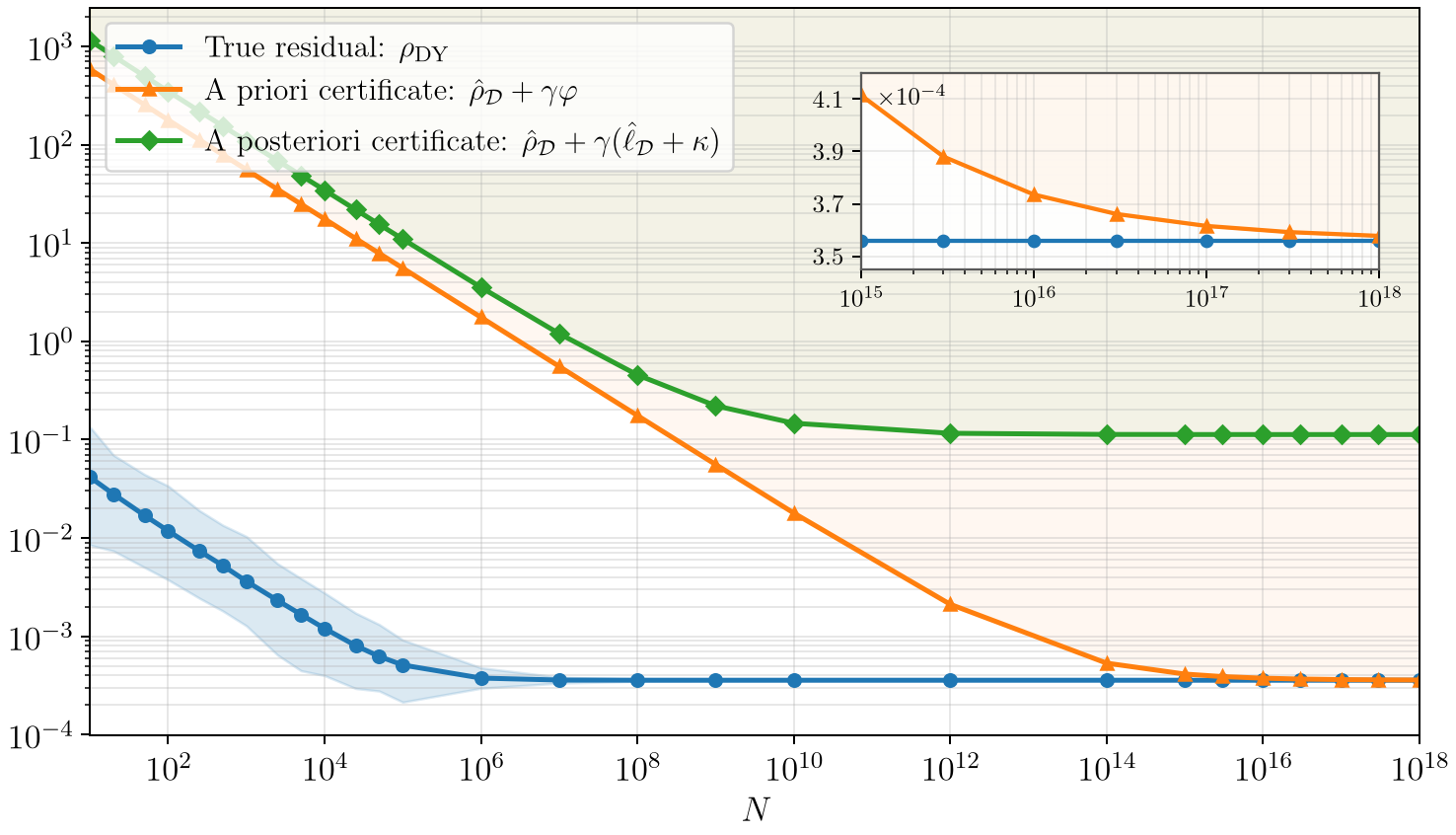}
\caption{Evolution of the fixed-point residual $\rho_{\mathrm{DY}}(\hat{z}_{50})$ associated with the hypothesis $\omega_{50} = \mathcal{A}_{\mathrm{DY}}(\mathcal{D})$, as a function of the number of samples in $\mathcal{D}$. The shaded blue area represents the smallest envelope collecting all the realizations of $\rho_{\mathrm{DY}}(\hat{z}_{50})$ corresponding to $1000$ random draws of $\mathcal{D}$.
}
\label{fig:portfolio_optimization_K50}
\end{figure}

We observe that the true fixed-point residual $\rho_{\mathrm{DY}}(\hat{z}_{50})$, shown in blue in~Figure~\ref{fig:portfolio_optimization_K50}, initially decreases as the sample size increases, before flattening at around $3.6 \cdot 10^{-4}$. This is because, while adding more samples when $N$ is small is beneficial as it improves the quality of the \gls*{SAA}~\eqref{eq:data_driven_davis_yin_splitting_v_portfolio_optimization}, for sufficiently large $N$, the difference between~\eqref{eq:data_driven_davis_yin_splitting_v_portfolio_optimization} and~\eqref{eq:davis_yin_splitting_v_portfolio_optimization} becomes negligible, and further reducing the population residual requires increasing the iteration budget $K$. Furthermore, we observe that the correction term~\eqref{eq:saa_error_martingale_concentration_bound} of the a priori finite-sample certificate of Theorem~\ref{th:finite_sample_guarantees_martingale_concentration} converges to zero, yielding asymptotically tight bounds on the true fixed-point residual $\rho_{\mathrm{DY}}(\hat{z}_{50})$ in terms of its empirical counterpart $\hat{\rho}_{\mathcal{D}}(\hat{z}_{50})$. Conversely, consistently with the discussion in Example~\ref{ex:average_oracle_discrepancy_does_not_vanish}, we note the a posteriori certificate~\eqref{eq:finite_sample_guarantees_algorithmic_stability} plateaus at approximately $0.112$ due to the presence of the non-vanishing term $\hat{\ell}_{\mathcal{D}}$.

\section{Conclusion}
In this paper, we analyzed a data-driven \gls*{DY} splitting algorithm for solving structured monotone inclusion problems in which one constituent operator is unavailable and its forward evaluation is replaced by a \gls*{SAA} constructed from finitely many noisy oracle samples. We showed that certifying the true fixed-point residual amounts to controlling the \gls*{SAA} error at the output of the data-driven \gls*{DY} algorithm, and we highlighted how bounding this error by the expectation of a surrogate loss function, as previously proposed in the literature to apply algorithmic stability arguments, can lead to conservative bounds whose statistical excess need not shrink with the sample size. We then observed that the conditional expectations of the \gls*{SAA} error term along the filtration generated by the samples form a Doob martingale, and we established asymptotically consistent a priori residual certificates via martingale inequalities under different monotonicity assumptions on the operators defining the inclusion problem. Future work includes extending our analysis to other data-driven operator splitting algorithms, relaxing the \gls*{iid} assumption on the samples in the dataset, and considering stochastic oracle approximations in which new samples are drawn at each iteration of the algorithm.

\bibliographystyle{IEEEtran}
\bibliography{references}

\end{document}